\newtheorem{lem}{Lemma}
\newtheorem{prop}{Proposition}
\date{}
\definecolor{orange}{RGB}{255,107,0}
\definecolor{green}{RGB}{0,160,20}
\begin{document}
\title{Optimal Resource Allocation for Delay Minimization in NOMA-MEC Networks}
\author{Fang Fang,~\IEEEmembership{Member,~IEEE}, Yanqing Xu,~\IEEEmembership{Member,~IEEE},
	Zhiguo Ding,~\IEEEmembership{Fellow,~IEEE},\\
	Chao Shen,~\IEEEmembership{Member,~IEEE}, Mugen Peng,~\IEEEmembership{Fellow,~IEEE},
	and George K. Karagiannidis,~\IEEEmembership{Fellow,~IEEE}
	\thanks{The work of M. Peng was supported by the National Natural Science Foundation of China under No. 61921003 and 61831002, and the Beijing Natural Science Foundation under No. JQ18016. The work of C. Shen was supported in part by the NSFC, China, under Grant 61871027 and Grant U1834210, in part by the State Key Laboratory of Rail Traffic Control and Safety under Grant RCS2019ZZ002. This paper was presented in part at the IEEE Global Communications Conference, Waikoloa, HI, USA, 2019.}
	\thanks{Fang Fang is with the Department of Engineering, Durham University, Durham DH1 3LE, U.K. (e-mail: fang.fang@durham.ac.uk).}
	\thanks{Z. Ding is with School of Electrical and Electronic Engineering, The University of Manchester, M13 9PL, U.K. (e-mail: zhiguo.ding@manchester.ac.uk).}
	\thanks{Y. Xu and C. Shen are with the State Key Laboratory of Rail Traffic Control and Safety, Beijing Jiaotong University, Beijing 100044, China (e-mail: yanqing\_xu@bjtu.edu.cn, chaoshen@bjtu.edu.cn).}
	\thanks{M. Peng is with the State Key Laboratory of Networking and Switching Technology, Beijing University of Posts and Telecommunications, Beijing 100876, China (e-mail: pmg@bupt.edu.cn).}
	\thanks{G. K. Karagiannidis is with the Wireless Communications \& Information Processing (WCIP) Group, Department of Electrical and Computer Engineering, Aristotle University of Thessaloniki, 54124 Thessaloniki, Greece (e-mail: geokarag@auth.gr).}
}\maketitle

\begin{abstract}
Multi-access edge computing (MEC) can enhance the computing capability of mobile devices, while non-orthogonal multiple access (NOMA) can provide high data rates. Combining these two strategies can effectively benefit the network with spectrum and energy efficiency. In this paper, we investigate the task delay minimization in multi-user NOMA-MEC networks, where multiple users can offload their tasks simultaneously through the same frequency band. We adopt the \emph{partial offloading} policy, in which each user can partition its computation task into \textit{offloading} and \textit{locally} computing parts. We aim to minimize the task delay among users by optimizing their tasks partition ratios and offloading transmit power. The delay minimization problem is first formulated, and it is shown that it is a nonconvex one. By carefully investigating its structure, we transform the original problem into an equivalent quasi-convex. In this way, a bisection search iterative algorithm is proposed in order to achieve the minimum task delay. To reduce the complexity of the proposed algorithm and evaluate its optimality, we further derive closed-form expressions for the optimal task partition ratio and offloading power for the case of two-user NOMA-MEC networks. Simulations demonstrate the convergence and optimality of the proposed algorithm and the effectiveness of the closed-form analysis.
\end{abstract}

\vspace{-0.5 cm}
\section{Introduction}
Driven by the explosive emergence of new compute-intensive applications in the Internet of things (IoT), especially ultra-low-latency applications, such as virtual reality (VR) and augmented reality (AR), multi-access edge computing (MEC) was proposed to enhance the computing capability of the mobile devices \cite{PMMECSurvey2017}. However, energy consumption and delay reduction still remain critical issues in practical systems. Due to high spectral efficiency, non-orthogonal multiple access (NOMA) has been proposed to combine with MEC to simultaneously support multiple users and lower the transmission latency and energy consumption. The combination of NOMA and MEC can provide various benefits including massive connectivity, low latency and high energy efficiency and flexibility to combine with other technologies, e.g., massive multiple-input multiple-output (MIMO) and millimeter wave (mmWave) \cite{AKiani2018JIOT}. As such, the resource allocation design is still imperative for realizing a low-cost and low-latency wireless network. In this work, we mainly focus on the offloading delay minimization in NOMA-MEC networks, where NOMA uplink transmission and MEC with \emph{partial} offloading (i.e., the computation task of each user can be partitioned into two parts, one of which will be offloaded for remote computation, while the other will be computed locally by the mobile devices.) are adopted. Therefore, the joint optimization of transmit power and task partition will be challenging to achieve the minimum task delay\footnote{Note that the task delay includes the transmission delay (offloading delay) and computation delay. Other network delays such as queuing delay will be considered in future work.}.
\subsection{Literature}
In this section, we first provide an overview of the MEC architecture and the NOMA-MEC systems. The recent research on resource allocation for minimizing energy consumption and task delay is presented for NOMA-MEC systems. MEC has been regarded as a key technology for the next generation of wireless networks due to its superior performance on energy consumption reduction and latency reduction \cite{YMaoMECSureveys2017}. As shown in Fig. 1, MEC can support various types of users such as IoT devices, mobile phones and self-driving cars at the edge of the network, where the base stations (BSs) equipped with MEC servers can provide cloud-like computing services for mobile devices with computation-intensive and delay-sensitive tasks \cite{WShiIEEEIoT2016}. In MEC networks, the majority of computing tasks can be offloaded to the MEC server at the BS for the remote computation. After the task computation at BS, the task results can be downloaded to mobile devices \cite{SAMECSurvey2014,HZhangFogIEEEWC2017}. 
In MEC offloading, tasks can be \emph{binary} offloaded (i.e., the computation task cannot be partitioned and must be either fully offloaded to the MEC server or computed locally) \cite{YMaoTWC2017,TXTranTVT2018,FWangTWC2018,SBIEEETWC2018} or \emph{partial} offloaded \cite{CYouTWC2017}. 

A very popular scheme is the power-domain NOMA, where the successive interference (SIC) technique is applied at the receiver, by which the user with large channel gain can remove the interference from the user with small channel gain \cite{DaiSurvey2018}. As a result, multiple users can transmit signals simultaneously with lower interference than orthogonal multiple access (OMA). Due to high spectral efficiency, the resource optimization in NOMA can achieve superior performance than OMA in terms of system sum rate and energy efficiency \cite{YSunIEEETCOM17,FangJSAC17}.

Driven by the superior performance of NOMA over OMA, NOMA uplink transmission and downlink transmission were proposed to be applied to MEC networks to support multiple users transmit signals simultaneously with lower interference \cite{ZDing2018TCOM}. Despite the benefits of NOMA-MEC, there still are some challenges including resource allocation, reliability and mobility, security and privacy, etc. Among them, resource optimization in NOMA MEC plays an important role in reducing task delay and energy consumption. Recently, some research has been carried to study the combination of NOMA and multi-user MEC networks  \cite{ZDing2018TCOM}. Most existing research works focus on computational resource allocation (e.g., tasks assignment) and communication resource allocation (e.g., offloading power and subchannel allocation)\cite{FWang2018TCOM,YPanCL2018,XCaoIoT2019,ZDing2018WCL,ZDing2018WCLDely,YWuJSTSP2019,LPSIC2018IoT,FangGlobecome2019}. How to allocate the network resources to reduce energy consumption and delay is an important issue to be addressed in NOMA-MEC. Specifically, in \cite{FWang2018TCOM}, the offloading tasks and offloading power levels of each user as well as the SIC decoding order were optimized to minimize the system energy consumption in NOMA-MEC. Moreover, in \cite{YPanCL2018}, the total energy consumption is minimized by optimizing the transmit powers, transmission time allocation, and task offloading partitions. The energy-efficient power allocation, time allocation and task assignment were proposed to minimize the energy consumption for MEC networks \cite{XCaoIoT2019}. To combine the advantages of OMA and NOMA systems \cite{FWang2018TCOM,YPanCL2018,ZDing2018TCOM}, a hybrid  NOMA-MEC scheme, where a user can first offload parts of its tasks within the time slot allocated to another user and then offload the remaining task during the time slot solely occupied by itself, was proposed in \cite{ZDing2018WCL,ZDing2018WCLDely}. In \cite{ZDing2018WCL}, to minimize the energy consumption, the transmit power and time slot were jointly optimized. Subsequently, the delay minimization was investigated via resource allocation in \cite{ZDing2018WCLDely}. Since the delay minimization is a crucial issue in NOMA-MEC, the resource optimization on offloading task assignment, transmit power and offloading time was investigated to reduce the task delay \cite{YWuJSTSP2019,LPSIC2018IoT,FangGlobecome2019}. Specifically, the overall delay of the computation tasks was minimized by a proposed algorithm of optimizing the offloading workload, offloading and downloading duration in \cite{YWuJSTSP2019}.
 Besides the computational resource, the SIC decoding order was optimized to reduce the task delay for NOMA enabled narrowband Internet of Things (NB-IoT) systems \cite{LPSIC2018IoT}. Different from the existing work \cite{YWuJSTSP2019,LPSIC2018IoT}, in this work, we mainly focus on the task delay minimization among all users in NOMA-MEC, in which the optimal solution is provided. A preliminary investigation on this problem was reported in \cite{FangGlobecome2019}\footnote{In this paper, we provide detailed proof for a key theorem, complexity analysis for the
proposed algorithms, and extensive simulation results. In addition, we also investigate a more
elaborate system model with the limited computing resource at the BS.}.
\begin{figure*}[t]
\centering
\graphicspath{{./figures/}}
\begin{minipage}[t]{0.48\textwidth}
\centering
\graphicspath{{./figures/}}
\includegraphics[width=0.95\linewidth]{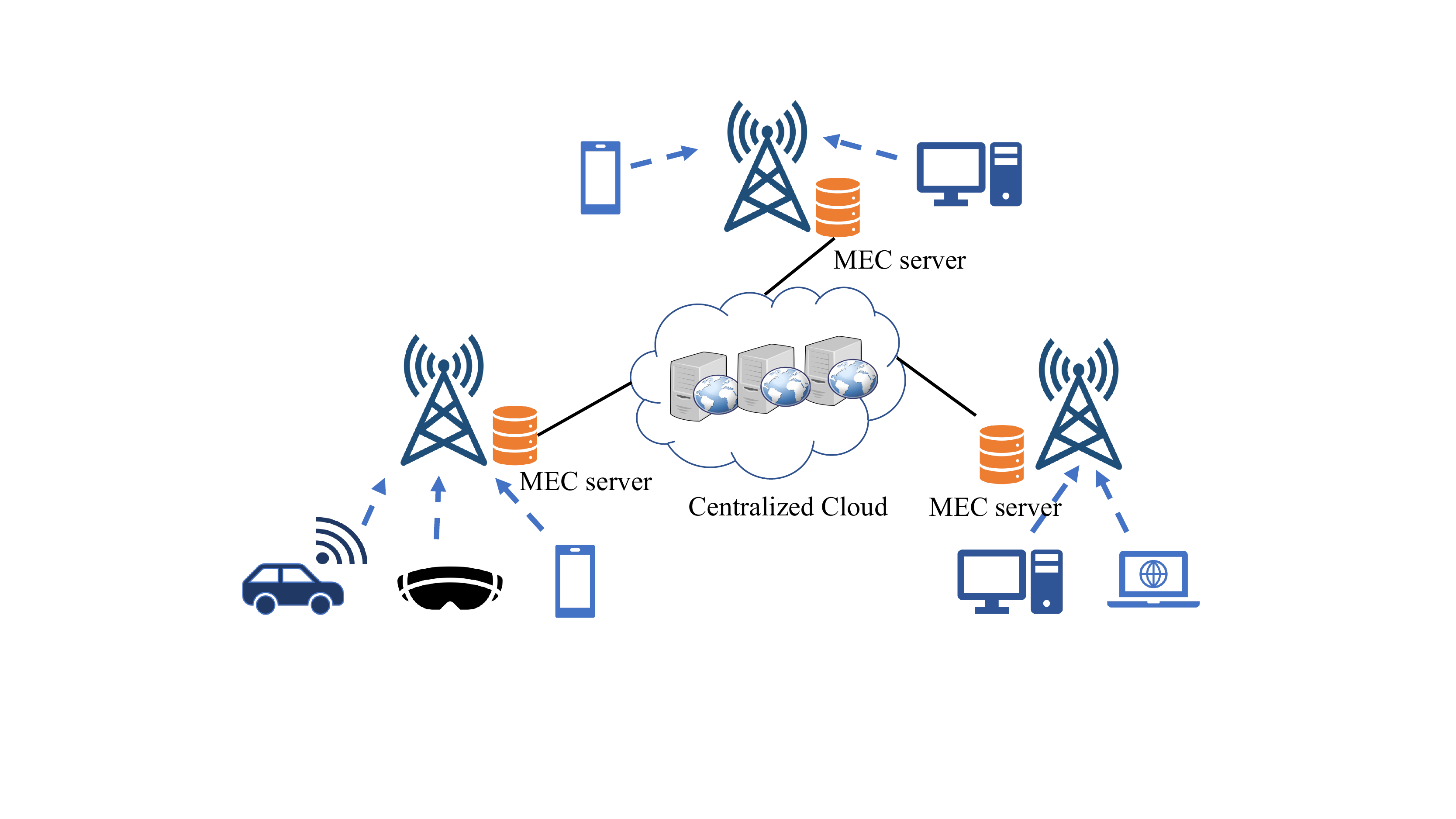}\\
	\caption{The architecture of MEC.} \label{Fig00}
\end{minipage}
\begin{minipage}[t]{0.48\textwidth}
\centering
	\graphicspath{{./figures/}}\includegraphics[width=0.95\linewidth]{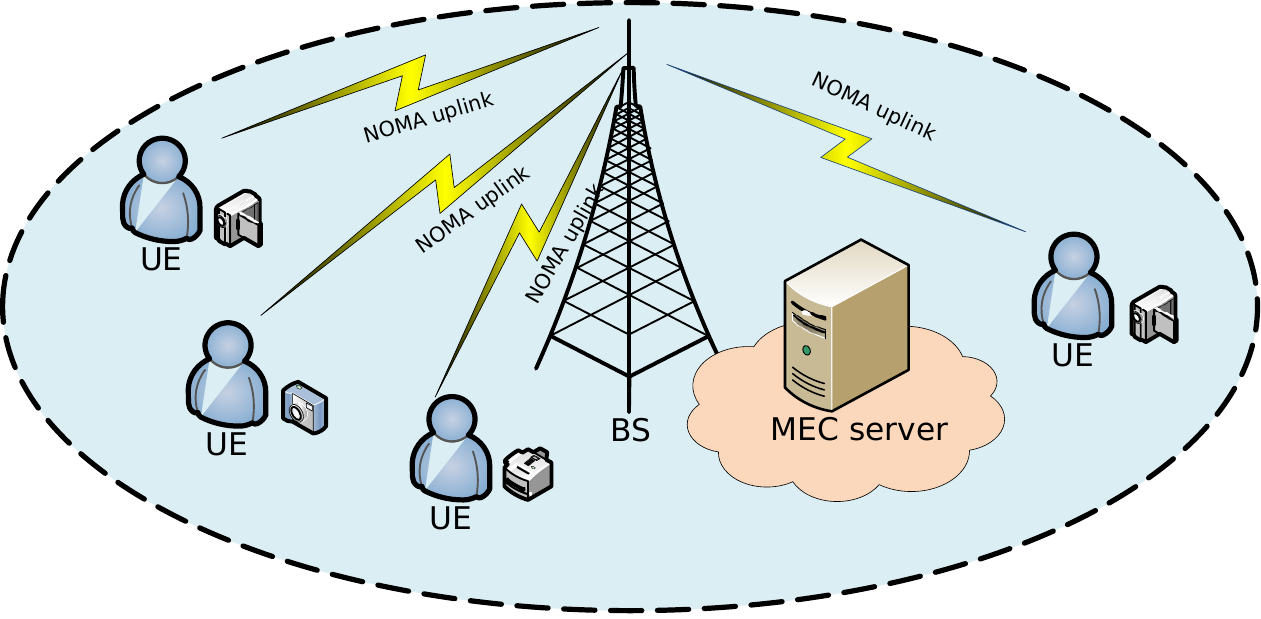}\\
	\caption{The uplink NOMA-MEC Network.} \label{Fig0}
\end{minipage}
\end{figure*}
\subsection{Motivation and Contribution}
Following a different approach from existing works, which mainly focus on energy minimization \cite{FWang2018TCOM,ZDing2018WCL,AKiani2018JIOT,SHanIoT2019,ZYEEGC2018,ZSongCL2018,QGEEDL2018GC} and fully offloading \cite{YMaoTWC2017,TXTranTVT2018,FWangTWC2018,SBIEEETWC2018},
in this paper, we focus on the task delay minimization by applying NOMA uplink transmission to the MEC \emph{partial offloading}\footnote{\emph{Partial offloading} is possible to implement for mobile applications that are composed of multiple procedures/components.}. There are the two main motivations for this work: First, from the communication perspective, the delay minimization is an important issue for MEC networks, especially for the offloading phase, which may cause a serious delay. Second, even though the delay minimization has been investigated in NOMA-MEC \cite{ZDing2018WCLDely,YWuJSTSP2019,LPSIC2018IoT}, insights of delay minimization for the NOMA-MEC network have not been obtained. Unlike the hybrid NOMA-MEC system in \cite{ZDing2018WCLDely}, we consider a more complex system, where pure NOMA-MEC and partial offloading are considered to minimize the task delay. The authors in \cite{YWuJSTSP2019,LPSIC2018IoT} have not considered the transmit power allocation, which is an important factor in reducing the task offloading delay. In this paper, we focus on \textit{data-partitioned-oriented applications}, in which the amount of data to be processed can be known beforehand, and the execution can be parallelized into processes. This requires that the data can be partitioned into subsets of any size. In practice, many mobile applications are composed of multiple procedures/components making it possible to implement partial computation offloading. However, only some partitions may be possible. That indicates that the optimal solution should be further quantized in practical implementation. Therefore, the results that we provide in this paper can be understood as a benchmark or an upper bound of the performance of any realistic offloading strategy [31]. The main contributions of this paper are listed as follow:
\begin{itemize}
\item [$\bullet$] In this paper, we apply NOMA into a multiuser MEC network where multiple users can offload their tasks to the MEC server simultaneously via the same frequency band. In this paper, we focus on \textit{data-partitioned-oriented applications}, in which the amount of data to be processed can be known beforehand, and the execution can be parallelized into processes. We assume that the data can be partitioned into subsets of any size, despite that, in practice, only some partitions may be possible. That indicates that the optimal solution should be further quantized in practical implementation. Therefore, the results that we provide in this paper can be understood as a benchmark or an upper bound of the performance of any realistic offloading strategy \cite{OM2015}. Considering the \emph{partial} task offloading scheme, the energy consumption limitation and offloading power limitation, the task completion time minimization problem is formulated as a nonconvex problem. Thus it is polynomial time unsolvable. By analyzing the properties of the formulated problem, some significant insights are revealed, and the corresponding propositions and Lemma are proposed to equivalently transform the original formulated problem into a simplified form. Based on those analytical results, the quasi-convexity of the transformed problem is proved. Therefore, a bisection searching  (BSS) based algorithm is proposed to find the globally optimal solution to the transformed problem. In the proposed algorithm, the original problem is solved by equivalently solving a series of feasibility subproblems. Moreover, we analyze the complexity of the proposed algorithm.  The convergence and optimality of the proposed BSS algorithm are evaluated by simulation results.  
\item [$\bullet$] Motivated by the practical applications, we focus on the two-user case to reduce the decoding complexity of SIC. To further reduce the complexity of the proposed BSS iterative algorithm, a corresponding proposition is proposed to equivalently transform the original problem into a convex problem. The convexity of the transformed problem is proved. Moreover, the closed-form expressions of the task partition ratios and offloading power are derived by exploiting the Lagrangian approach, i.e., Karush-Kuhn-Tucker (KKT) conditions, for the two-user NOMA MEC network. The simulation results demonstrate that the proposed BSS iterative algorithm matches our derived optimal solution, which reveals that our proposed BSS algorithm can converge to the optimal solution. Compared with existing schemes, i.e., fully NOMA system and fully OMA system, our proposed algorithm can provide the minimum task completion time in the simulation results. 
\item[$\bullet$] We also consider a scenario where the MEC server has limited computing resources. In this case, the computing time at the MEC server cannot be ignored. The optimal solution is derived to minimize the task delay for NOMA-MEC networks by considering computing time at the MEC server. A comparison of simulations is provided to demonstrate the superior performance of the proposed solution.
\end{itemize}

\subsection{Organization}

The organization of this paper is as follows. We introduce
the system model and formulate the task offloading time minimization problem in Section II. The BSS algorithm is proposed in Section III. In Section IV, the optimal closed-form solution is derived for the two-user case. Simulation results are presented in Section IV. Section V provides the optimal solution for limited computing resource MEC servers. Section VI concludes this paper.

\section{SYSTEM MODEL AND PROBLEM FORMULATION}
\subsection{NOMA-enabled Multi-user MEC Networks}
As depicted in Fig. \ref{Fig0}, the multi-user NOMA-MEC network consists of $M$ users randomly distributed in a single cell and one BS equipped with the MEC server located in the cell center. We assume that the BS and all users are equipped with a single antenna. The indices of users are defined as $m\in\{1,2,\cdots,M\}$. Denote the offloading task partial factor of User $m$ ($U_m$) by $\beta_m$, thus $(1-\beta_m)$ is the task partial factor of the task to locally compute at $U_m$, where $\beta_m \in (0,1)$. By implementing NOMA, the SIC technique is applied at the MEC server based BS. Define ${h}_{m}$ as the channel gain from $U_m$ to the MEC server. Without loss of generality, the channel gains of $M$ users are sorted as $ |g_{1}|\leq|g_{2}|\leq \cdots \leq |g_{M}|$. In this paper, we adopt block-fading channels, which indicates that the CSI will stay constant in a resource block but varies independently across different resource blocks. In this work, we assume that the BS knows the perfect CSI of all users. Note that the overhead will be costly if each requires to know the global CSI. However, in this work, the proposed algorithm is centralized to avoid the significant overhead between users. Specifically, the BS performs the proposed algorithm to make decisions for users; then the BS will broadcast the decisions to all the users by one pilot sequence. Therefore, the information exchange between users is not included in this work. Moreover, the BS only needs to know the CSI once within one block resource.

In the uplink NOMA system, the user with higher channel gain should be assigned with larger transmit power \cite{UplinkSIC2014}. Thus the SIC decoding order is assumed as the decreasing order of channel gains. It indicates that the MEC server first decodes the information transmitted by $U_M$ and then the information of $U_{M-1}$, until $U_1$. Define $p_{m}$ as the offloading power of $U_m$, then the signal-to-interference-plus-noise-ratio (SINR) of $U_m$ received at the MEC server can be written as
\begin{equation}
\Gamma_{m}^{off}=\frac{|h_{m}|^2p_{m}}{\sum \limits_{j=1}^{m-1}|h_{j}|^2p_{j}+1}
\end{equation}
where $h_m=\frac{g_m}{\sigma^2}$ is the channel gain normalized by $\sigma^2$, and $\sigma^2$ represents zero-mean complex additive white Gaussian noise (AWGN) power. Denote the system bandwidth by $B$ Hz, then the achievable data rate of $U_m$ can be written as
\begin{equation}\label{R_m_n_off}
\begin{aligned}
R_{m}
=B\log_2\left(\frac{\sum \limits_{i=1}^{m}|h_{i}|^2p_{i}+1}{\sum \limits_{j=1}^{m-1}|h_{j}|^2p_{j}+1}\right).
\end{aligned}
\end{equation}
Therefore, applying uplink NOMA transmission, the total offloading data rate can be written as
\begin{equation}
	R=\sum\limits_{m=1}^MR_m=B\log_2\left( \sum\limits_{i=1}^m|h_{i}|^2 p_{i} +1\right).
\end{equation} 
According to the MEC computation model, each user is enabled to offload a part of its task to the MEC server for remote computation. After computation, the task result can be downloaded from the MEC server to users. The task of $U_m$ can be described by two parameters $(L_m, C_m)$ where $L_m$ is the input number of bits for this task, and $C_m$ denotes the number of CPU cycles required to compute one bit of this task. In this paper, we assume that the downloading time from the BS equipped with the MEC server to the users is negligible. Two reasons are supporting this assumption. First, the size of task results are generally small \cite{CWangTVT2017,ZTTVT2018,XLyuTCOM2018,YHeTVT2018} and the BS generally has the comparatively more power to transmit the task result than the user offloading tasks to the BS. As a result, the downloading time is much shorter than the offloading time. Second, the downloading time optimization problem is more related to resource allocation from the perspective of BS including transmit power allocation and subchannel allocation. However, in this work, we mainly focus on resource allocation from the perspective of users including offloading power and task partition.

\subsubsection{Time Consumption for Offloading}
In this phase, each user will offload part of its task to the MEC server for remote executions. According to the achievable offloading data rate of $U_m$ \eqref{R_m_n_off}, the task offloading time from $U_m$ to the MEC server can be written as 
 \begin{equation}\label{T_m_off} 
  \begin{aligned}
 T_{m}^{off}=\frac{\beta_{m}L_m}{R_{m}}.
  \end{aligned}
  \end{equation}
The offloading energy consumption for $U_m$ is
 \begin{equation}\label{E_m_off}
\begin{aligned}
E_m^{off}=T_{m}^{off}p_{m}.
\end{aligned}
\end{equation}
\subsubsection{Mobile Execution Time}
For $U_m$, partial task $\beta_mL_m$ is offloaded to the MEC server for remote computation, and the remaining task $(1-\beta_m)L_m$ is computed locally. Denote the CPU frequency at $U_m$ is $f_m^{loc}$ (in cycles per second), then the local computation time of $U_m$ is given by
  \begin{equation}\label{T_m_loc}
 \begin{aligned}
 T_{m}^{loc}=\frac{(1-\beta_{m})L_mC_m}{f_m^{loc}}.
 \end{aligned}
 \end{equation}
The power consumption of the CPU cores at the MEC server is $P_m=\kappa_m(f_m^{loc})^3$, where $\kappa_m$ denotes the effective capacitance coefficient for each CPU cycle of the local user $U_m$ \cite{YMaoTWC2017}. Thus the energy consumption of computing at the mobile device $U_m$ can be written as
\begin{equation}\label{E_m_comp}
\begin{aligned}
E_{m}^{c}=p_{m}^{CPU}T_{m}^{loc}=\kappa_m(1- \beta_m) L_mC_m (f_m^{loc})^2.
\end{aligned}
\end{equation}

\subsection{Problem Formulation}
We first consider the scenario, where the MEC server at BS has considerable computation resource, which indicates the computation time at the BS can be ignored compared to the offloading time. For each user, a partial task will be offloaded to the MEC server for remote computation, and the remaining task will be computed locally. Each user's task will cost time to execute. Therefore, in this paper, we aim to minimize the maximum task completion time among the users, by optimizing the task assignment (i.e., offloading task ratio $\beta_m$) and communication resources (i.e., the offloading transmit power $p_m$). The computing time at the MEC server can be ignored compared to the offloading time, due to the high computing capacity of the MEC server. 
By considering the local computing time, the task completion time of $U_m$ can be written as
\begin{equation}\label{T_m}
T_{m}= \max \left\{T_{m}^{off},\ T_m^{loc}\right\}.
\end{equation}
Therefore, the task completion time minimization problem can be formulated as
\begin{subequations}\label{Prob:T_min}
\begin{align}
\mathop {\min\ \max }\limits_{{\{ \bm{\beta},\bm{p}\}}}  &\left\{T_{m}^{off},T_m^{loc},\forall m\right\} \\
 	 \text{s.t.}  \quad \quad \quad &0 \leq \beta_{m}\leq 1, \forall m,\label{eq:beta_con}\\   
    &0\leq p_{m} \leq P_{\max}, \forall m,\label{eq:p_con}\\ 
 	&E_m^c+E_m^{off}\leq E_{\max}, \forall m,\label{eq:E_m_con} 
\end{align}
\end{subequations}
where $\bm{\beta}=[\beta_1,\cdots,\beta_M]^T$ and $\bm{p}=[p_1,\cdots,p_M]^T$. Constraint \eqref{eq:beta_con} specifies the range of the offloading task ratio, while constraint \eqref{eq:p_con} describes the range of the offloading transmit power. Furthermore, constraint \eqref{eq:E_m_con} guarantees that the energy consumed at each mobile user is limited to the maximum energy consumption $E_{\max}$. Since problem \eqref{Prob:T_min} is nonconvex, next, we propose a transformation and an optimal solution.

\section{Optimal Solution for the Multi-user case}
 \subsection{Significant Observations and Problem transformation}
  In this section, we focus on the multi-user case and attain the minimum task delay for multi-user NOMA-MEC networks. A BSS iterative algorithm is proposed to find the global solution to problem \eqref{Prob:T_min}. Before solving this problem, we propose the following Proposition and Lemma.
\begin{prop} \label{T1=T2}
 In order to minimize the maximum task completion time of different users, i.e.,
\begin{equation}\label{T_m^off}
 	\mathop {\min\ \max }\limits_{\{\beta_m, \beta_{m'}, p_m, p_{m'}\}}\left\{T_{m}^{off},\ T_{m'}^{off}\right\},
 	\end{equation}
 	the task offloading ratios and transmit powers $\{\beta_m,p_m\}$ will be optimized, to make their offloading time equals to each other's, i.e., $T_m^{off}=T_{m'}^{off}$.
 	\end{prop}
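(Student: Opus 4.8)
The plan is to argue by contradiction, using the monotonicity of each offloading time in its own transmit power and partition ratio, together with the coupling induced by the SIC decoding order. First I would record the elementary facts that follow directly from \eqref{T_m_off} and \eqref{R_m_n_off}: for a fixed power profile, $T_k^{off}=\beta_kL_k/R_k$ is strictly increasing in $\beta_k$, and for fixed $\beta_k$ it is strictly decreasing in $p_k$, since raising $p_k$ strictly increases the numerator of the rate ratio in \eqref{R_m_n_off} while leaving its denominator unchanged. I would also record the cross-term behaviour: because the two users are decoded in decreasing order of channel gain, raising the power of the \emph{weaker} of the two appears as extra interference for the \emph{stronger} one, so it strictly lowers the stronger user's rate and raises its offloading time, whereas raising the stronger user's power does not affect the weaker user at all.

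Next I would set up the contradiction. Suppose an optimizer $\{\beta_m,\beta_{m'},p_m,p_{m'}\}$ of \eqref{T_m^off} satisfies $T_m^{off}\neq T_{m'}^{off}$; without loss of generality let $T_m^{off}>T_{m'}^{off}$, so the objective value equals $T_m^{off}$ and user $m'$ carries strict slack. The goal is to exhibit a feasible perturbation that strictly decreases $T_m^{off}$ while keeping $T_{m'}^{off}$ strictly below the new value; any such move lowers the maximum and contradicts optimality. The clean direction is available when $m$ is the stronger user, since then its time can be reduced by raising $p_m$ or shrinking $\beta_m$ without touching $T_{m'}^{off}$ at all; by continuity a sufficiently small step keeps the other time below the reduced one. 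When $m$ is the weaker user, reducing $T_m^{off}$ does perturb $T_{m'}^{off}$ through the interference term, but since $T_{m'}^{off}<T_m^{off}$ strictly, continuity again guarantees that a small enough step preserves $T_{m'}^{off}<T_m^{off}$, so the maximum still strictly drops. The degenerate case $\beta_m=0$ is excluded immediately, because it would force $T_m^{off}=0\le T_{m'}^{off}$, contradicting $T_m^{off}>T_{m'}^{off}$.

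The delicate point, and the step I expect to be the main obstacle, is feasibility of the perturbation under \eqref{eq:p_con} and \eqref{eq:E_m_con}. Shrinking $\beta_m$ lowers the offloading energy $E_m^{off}$ but raises the local-computation energy $E_m^c$ in \eqref{E_m_comp}, so I must verify that the net change in $E_m^c+E_m^{off}$ does not breach $E_{\max}$, while raising $p_m$ must respect $p_m\le P_{\max}$. I would handle this by moving \emph{along} the active energy surface rather than in a single coordinate: simultaneously increasing $p_m$ and decreasing $\beta_m$ so that the budget $E_m^c+E_m^{off}$ is held fixed, noting that \emph{both} changes push $T_m^{off}$ downward, so this joint move is a strict descent direction whenever $p_m<P_{\max}$. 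If the energy constraint is slack at the optimum, any small single-coordinate step suffices and the argument closes at once. The only genuinely blocked configuration is $p_m=P_{\max}$ with a binding energy constraint; here I would check the sign of the first-order energy change of a pure $\beta_m$-decrease, which is governed by whether the offloading energy per bit $P_{\max}/R_m$ exceeds the local energy per bit $\kappa_mC_m(f_m^{loc})^2$, to confirm that a feasible reduction of $T_m^{off}$ still exists. Ruling out the strict inequality in this way forces $T_m^{off}=T_{m'}^{off}$ at every optimum, which is the claim.
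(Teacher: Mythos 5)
Your argument is essentially the paper's own: a proof by contradiction in which the transmit power of the user with the larger offloading time is perturbed so that, by the monotonicity of $T_m^{off}$ in $p_m$, the interference coupling of the SIC order, and the intermediate value theorem, the two offloading times cross at a strictly smaller common value, contradicting optimality. The only substantive difference is that you explicitly interrogate the feasibility of the perturbation under \eqref{eq:p_con} and \eqref{eq:E_m_con} --- a point the paper's appendix proof silently assumes away by positing that $p_m^*$ can always be increased --- so your version is, if anything, more careful than the original, even though your treatment of the fully blocked configuration ($p_m=P_{\max}$ with a binding energy budget) is left as a sign check rather than completed.
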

 \begin{proof}
 	The proof is provided in Appendix \ref{T_1_off=T_m_off}.
 \end{proof}
 	
 	Based on Proposition \ref{T1=T2}, we adopt a pure NOMA offloading scheme, where multiple users can offload their tasks simultaneously in the same frequency band. Next, we provide a Lemma, which equivalently transforms the optimization problem into a simplified one.

 \begin{lem}\label{Lemma1}
 	Let's consider a single-cell NOMA-MEC network, where $M$ users offload their signals to the BS, equipped with MEC server within the same transmit time $T$,
 	\begin{equation}\label{Tm}
     T=T_{m}^{off}=T_{m'}^{off},\ \forall m\neq m'. 
 	\end{equation}
 	Then, \eqref{Tm} can be equivalently transformed to 
 	\begin{equation}\label{Tm2}
 	\tilde{T}_m^{off}=\frac{\sum\limits_{i=1}^{m}\beta_{i}L_i}{B\log_2\left( \sum\limits_{i=1}^m|h_{i}|^2 p_{i} +1\right)}, \quad \forall m.  
 	\end{equation}
 \end{lem}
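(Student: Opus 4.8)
The plan is to exploit the telescoping structure of the NOMA achievable rates and then relate the cumulative offloaded bits to the cumulative rate. First I would rewrite the defining relation \eqref{T_m_off} as $\beta_m L_m = R_m T_m^{off}$, so that the number of bits offloaded by each user equals the product of its rate and its offloading time. The crucial observation is that the per-user rates in \eqref{R_m_n_off} telescope when summed over the first $m$ users: since the interference-plus-noise denominator of $U_i$ is exactly the received-power-plus-noise numerator of $U_{i-1}$, the product inside the logarithm collapses, giving
\begin{equation}
\sum_{i=1}^{m} R_i = B\log_2\left( \sum_{i=1}^{m} |h_i|^2 p_i + 1 \right),
\end{equation}
which is precisely the total NOMA offloading rate already recorded in (3). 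This cancellation is the single analytical fact that drives the whole lemma, so I would state and verify it first.

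Next I would impose the common-time condition \eqref{Tm}. Setting $T_i^{off} = T$ for every $i$ gives $\beta_i L_i = R_i T$; summing these identities over $i = 1, \dots, m$ and pulling out the common factor $T$ yields $\sum_{i=1}^{m} \beta_i L_i = T \sum_{i=1}^{m} R_i$. Dividing by the telescoped aggregate rate displayed above produces exactly
\begin{equation}
T = \frac{\sum_{i=1}^{m} \beta_i L_i}{B\log_2\left( \sum_{i=1}^{m} |h_i|^2 p_i + 1 \right)} = \tilde{T}_m^{off},
\end{equation}
which establishes \eqref{Tm2}. Equivalently, $\tilde{T}_m^{off}$ is just the rate-weighted average $\big(\sum_i R_i T_i^{off}\big)/\big(\sum_i R_i\big)$ of the individual offloading times, so it collapses to the common value $T$.

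For the reverse direction, which is what makes the transformation an equivalence rather than a one-way implication, I would suppose that $\tilde{T}_m^{off}$ takes the same value $T$ for all $m$. Writing $S_m := \sum_{i=1}^{m} \beta_i L_i$ and $\rho_m := \sum_{i=1}^{m} R_i$, the hypothesis reads $S_m = T \rho_m$ for every $m$; subtracting the relation for $m-1$ from that for $m$ cancels the cumulative terms and leaves $\beta_m L_m = T R_m$, i.e.\ $T_m^{off} = T$ for each $m$, recovering \eqref{Tm}. I do not anticipate a genuine obstacle here; the only point demanding care is the boundary term in the telescoping sum, where the interference denominator of $R_1$ is the empty sum plus one, so that $U_1$'s contribution is $B\log_2\left( |h_1|^2 p_1 + 1 \right)$ and the collapse is exact with no leftover residual factor.
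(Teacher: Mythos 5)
Your proof is correct and follows essentially the same route as the paper's: both hinge on the telescoping identity $\sum_{i=1}^{m} R_i = B\log_2\bigl(1+\sum_{i=1}^{m}|h_i|^2 p_i\bigr)$, establish the forward direction by aggregating the per-user relations $\beta_i L_i = R_i T$ (the paper phrases this via the mediant property of equal ratios), and recover \eqref{Tm} from \eqref{Tm2} by peeling off consecutive cumulative identities, which is the same algebra as the paper's induction on $m$. No gaps.
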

 \begin{proof}
 	The proof is given in Appendix \ref{Lemma1Proof}.
 \end{proof}
Based on Lemma \ref{Lemma1} and Proposition \ref{T1=T2}, we have $\tilde{T}_m^{off}=T_m^{off}=T_{m'}^{off}, \forall m \neq m'$. Thus, the task completion minimization problem for the multi-user NOMA-MEC can be reformulated as:
\begin{subequations}\label{Prob:T_min_2}
	\begin{align}
	\mathop {\min \quad \max }\limits_{ \{\bm{\beta},\bm{p}\} }  &\left\{\tilde{T}_m^{off},T_m^{loc},\forall m\right\} \\
	\text{s.t.}  \quad \quad \quad &\beta_{m}\in [0,1], \forall m, \label{eq:beta_con_2}\\   
	&0\leq p_{m}\leq P_{\max},\forall m,\label{eq:p_con_2}\\ 
	&E_m\leq E_{\max},\forall m.\label{eq:E_1_con}
	\end{align}
\end{subequations}
where $E_m=  \kappa_m \beta_m L_m (f_m^{loc})^2+\tilde{T}_m^{off}p_m, \forall m$ is the energy consumed at $U_m$.
 This problem is  nonconvex due to the nonconvexity of $\tilde{T}_m^{off}$ with respect to $\{\beta_m,p_m\}$. Thus it is challenging to obtain its global optimum within polynomial time. 

\subsection{BSS Iterative Algorithm}
\begin{algorithm}[!t] \small
	\caption{~BSS algorithm for problem \eqref{Prob:T_transfer} }\label{Alg1}
	\begin{algorithmic}[1] 
		\STATE {{\bf Initialization:} Set $\alpha_T^{\min}=0$, $\alpha_T^{\max}=\max\{\frac{L_1C_1}{f_1^{loc}},\cdots,\frac{L_MC_M}{f_M^{loc}}\}$ and the accuracy $\epsilon=10^{-4}$.}
		\WHILE {$\alpha_T^{\max}-\alpha_T^{\min}> \epsilon$}
		\STATE {Set $\alpha_T=(\alpha_T^{\min}+\alpha_T^{\max})/2$.  }
		\STATE {Solve the convex feasibility problem \eqref{Prob:T_feasible} and find $C_{\alpha_T}$. }
		\IF{$C_{\alpha_T}\neq \phi$}
		\STATE Update $\alpha_T^{\max}=\alpha_T$.
		\ELSE
		\STATE Update $\alpha_T^{\min}=\alpha_T$.
		\ENDIF
		\ENDWHILE
		\STATE {{\bf Output:} $\alpha_T^*=\frac{\alpha_T^{\min}+\alpha_T^{\max}}{2}$, $\bm{\beta}^*$ and $\bm{p}^*$.}
	\end{algorithmic}\label{Al1}
\end{algorithm}

\begin{prop}\label{quasi}
	The objective function in problem \eqref{Prob:T_min_2} is strictly quasi-convex, given by the expressions \eqref{T_m_loc} and \eqref{T_m^off}. 
\end{prop}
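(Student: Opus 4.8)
The plan is to use the sublevel-set characterization of quasi-convexity together with the fact that the pointwise maximum of strictly quasi-convex functions is again strictly quasi-convex. The objective of \eqref{Prob:T_min_2} is $F(\bm{\beta},\bm{p})=\max\{\tilde{T}_m^{off},\,T_m^{loc}:\forall m\}$, so it suffices to show that every component is strictly quasi-convex on the feasible domain and that the $\max$ operation preserves this property. I would first dispose of the local terms: each $T_m^{loc}=(1-\beta_m)L_mC_m/f_m^{loc}$ is affine in the decision vector, and an affine map satisfies $\lambda a+(1-\lambda)b<\max\{a,b\}$ whenever $a\ne b$ and $\lambda\in(0,1)$, so it is strictly quasi-convex.

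The substantive step is the offloading term $\tilde{T}_m^{off}=N_m(\bm{\beta})/D_m(\bm{p})$ with $N_m(\bm{\beta})=\sum_{i\le m}\beta_iL_i$ and $D_m(\bm{p})=B\log_2(\sum_{i\le m}|h_i|^2p_i+1)$. On the feasible set $N_m$ is linear and nonnegative, while $D_m$ is concave (a logarithm of an affine function) and strictly positive once some offloading power is spent. For two feasible points $u,v$ with $\tilde{T}_m^{off}(u)<\tilde{T}_m^{off}(v)=:\alpha$ and $z=\lambda u+(1-\lambda)v$, $\lambda\in(0,1)$, linearity gives $N_m(z)=\lambda N_m(u)+(1-\lambda)N_m(v)$, and from $N_m(u)=\tilde{T}_m^{off}(u)D_m(u)<\alpha D_m(u)$, $N_m(v)=\alpha D_m(v)$ together with the concavity bound $D_m(z)\ge\lambda D_m(u)+(1-\lambda)D_m(v)$ one obtains $N_m(z)<\alpha D_m(z)$, i.e. $\tilde{T}_m^{off}(z)<\alpha=\max\{\tilde{T}_m^{off}(u),\tilde{T}_m^{off}(v)\}$. (Equivalently, each sublevel set $\{N_m-\alpha D_m\le 0\}$ is convex because $N_m-\alpha D_m$ is convex for $\alpha\ge 0$.) Hence every $\tilde{T}_m^{off}$ is strictly quasi-convex.

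To close the argument I would verify that the maximum inherits strict quasi-convexity. Suppose $F(u)<F(v)$ and let index $j$ attain $F(z)=f_j(z)$ for $z=\lambda u+(1-\lambda)v$. Then $f_j(u)\le F(u)<F(v)$ and $f_j(v)\le F(v)$; a short case split, using plain quasi-convexity when $f_j(u)=f_j(v)$ and strict quasi-convexity of $f_j$ when $f_j(u)\ne f_j(v)$, forces $f_j(z)<F(v)=\max\{F(u),F(v)\}$ in every case. Applying this with $F=\max\{\tilde{T}_m^{off},T_m^{loc}\}$ yields the proposition.

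I expect the main obstacle to be the strictness rather than the quasi-convexity itself. The numerator $N_m$ is only linear and the denominator $D_m$ is concave but not strictly concave (it is flat along directions in which $\sum_{i\le m}|h_i|^2p_i$ stays constant), so no strict-concavity shortcut is available; the strict inequality must instead be extracted from the strict ordering $\tilde{T}_m^{off}(u)<\alpha$ at one endpoint combined with $\lambda>0$, as above, and the max-preservation step must be argued by the explicit case analysis rather than merely quoted. A secondary technical point is to ensure $D_m(\bm{p})>0$ throughout, i.e. to restrict attention to the physically meaningful region where a positive offloading power is used so that each ratio is well defined.
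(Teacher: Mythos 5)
Your proof is correct and rests on the same central idea as the paper's: quasi-convexity via convexity of the sublevel sets, with the key observation that for $\alpha\ge 0$ the set $\{N_m(\bm{\beta})-\alpha D_m(\bm{p})\le 0\}$ is convex because a linear numerator minus $\alpha$ times a concave (log-of-affine) denominator is convex. Where you diverge is in how strictness is obtained, and your route is the sounder one. The paper tries to extract strictness by claiming the Hessian of $(a_1x_1+a_2x_2)-B\log_2(1+b_1y_1+b_2y_2)$ is positive \emph{definite}; it is in fact only positive semi-definite (the quadratic form vanishes for any direction with $b_1v_3+b_2v_4=0$, e.g.\ a pure $\bm{\beta}$-direction), so that function is convex but not strictly convex and the paper's strictness argument does not actually go through. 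You instead pull the strict inequality out of the strict ordering $N_m(u)<\alpha D_m(u)$ at one endpoint together with $\lambda>0$, which is exactly what is needed, and you make explicit the max-preservation step (with the correct caveat that plain quasi-convexity of each component is also required when the component values coincide at the two endpoints) that the paper handles only implicitly by intersecting sublevel sets. Your argument is also stated for general $M$ rather than just the two-user instance the paper computes. The one point to keep visible in a final write-up is the positivity of $D_m(\bm{p})$ on the relevant domain, which you already flag.
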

\begin{proof}
	The proof is shown in Appendix \ref{QuasiProof}.
\end{proof} 

To obtain the optimal solution, problem \eqref{Prob:T_min_2} can be equivalently transformed to the following \cite{BoydConv2004}:
\begin{subequations}\label{Prob:T_transfer}
	\begin{align}
	\mathop {\min}\limits_{\{\bm{\beta},\bm{p},\alpha_T \}} \quad& \alpha_T \\
\text{s.t.}  \quad &\frac{\sum\limits_{i=1}^m\beta_{i} L_{i}}{B\log_2\left(1+ \sum\limits_{i=1}^m|h_{i}|^2 p_{i} \right)}\leq \alpha_T,\
\forall m \label{R_alphaT_M}\\
&\frac{(1-\beta_m)L_mC_m}{f_m^{loc}}\leq \alpha_T, \
\forall m \label{bm_alphaT} \\
& 0\leq\beta_{m}\leq 1,\  \forall m \label{b_0_1_M}\\
& 0\leq p_m\leq P_{\max},\  \forall m \label{p_0_max_M}\\
& \kappa_m(1- \beta_m) L_mC_m (f_m^{loc})^2+\alpha_Tp_m\leq E_{\max}, \ \forall m\label{EE_m}\\\nonumber
	\end{align}
\end{subequations}
where $\alpha_T$ is an auxiliary variable. Note that \eqref{Prob:T_transfer} is still nonconvex since the inequality constraint set is not convex in $\{\bm{\beta} ,\bm{p}, \alpha_T\}$. However, according to the quasi-convexity of \eqref{Prob:T_min_2}, the constraints \eqref{R_alphaT_M} and \eqref{EE_m} are convex when we fix $\alpha_T$. Therefore, \eqref{Prob:T_transfer} becomes a feasibility problem given by the fixed $\alpha_T$, and can be optimally solved by solving a series of convex feasibility subproblems. For a given $\alpha_T$, the feasibility problem can be formulated as:
\begin{subequations}\label{Prob:T_feasible}
	\begin{align}
	\text{ {find }}\quad&\{\bm{\beta} ,\bm{p} \} \\
	\text{s.t.} \quad&\eqref{R_alphaT_M}-\eqref{EE_m}.
	\end{align}
\end{subequations}
Note that the convex constraint set can be denoted by 
\begin{equation}
	C_{\alpha_T}=\{\{\bm{\beta} ,\bm{p} \}|\eqref{R_alphaT_M}-\eqref{EE_m}\}, \forall \alpha_T.
\end{equation}
In order to solve the problem in \eqref{Prob:T_transfer} through \eqref{Prob:T_feasible}, BSS can be utilized to find the optimal solution \cite{YXuTSP2017}. Specifically, Algorithm \ref{Al1} can be used to find the minimum task completion time of user tasks. In this algorithm, let's denote the optimal solution of the feasibility problem \eqref{Prob:T_transfer} by $\alpha_T^*$. We first initialize $\alpha_T$ by its lower and upper bounds. For a given $\alpha_T$, the problem \eqref{Prob:T_feasible} is feasible if $C_{\alpha_T}\neq\phi$, and holds that $\alpha_T\geq \alpha_T^*$. Then $\alpha_T$ will be decreased by step 6 of Algorithm \ref{Al1}. The problem \eqref{Prob:T_feasible} is infeasible if $C_{\alpha_T}=\phi$, and holds that  $\alpha_T<\alpha_T^*$. Then $\alpha_T$ will be increased by step 8 in Algorithm \ref{Al1}. The algorithm eventually converges to the unique optimal solution to the problem due to its strictly quasi-convexity \cite{ChongChiConvex}.
\subsection{Convergence Rate and Complexity Analysis}
Let us calculate the convergence rate of Algorithm 1. Denote the iteration index by $n$ and the optimal delay by $\alpha_T^{*}$. At  the $1$-st iteration of Algorithm 1, the bi-section range is $[\alpha_T^{\max},\alpha_T^{\min}]$. Let $\alpha_T^{(1)}=\frac{\alpha_T^{\max}-\alpha_T^{\min}}{2}$, we have $|\alpha_T^{(1)}-\alpha_T^{*}|\leq \frac{1}{2}\left(\alpha_T^{\max}-\alpha_T^{\min}\right)$. At the $2$-th iteration, the bi-section range is $\left[\alpha_T^{\max,(2)},\ \alpha_T^{\min,(2)}\right]$, $\alpha_T^{(1)}=\frac{\left(\alpha_T^{\max,(2)}-\alpha_T^{\min,(2)}\right)}{2}$. Note that one of ${\alpha_T^{\max,(2)},\ \alpha_T^{\min,(2)})}=\frac{1}{2}(\alpha_T^{\max}-\alpha_T^{\min})$ and the other one remain the same. Then we have 
$|\alpha_T^{(2)}-\alpha_T^{*}|\leq \frac{1}{2}(\alpha_T^{\max,(2)}-\alpha_T^{\min,(2)})=\frac{1}{2^2}(\alpha_T^{\max}-\alpha_T^{\min})$.
Therefore, by deduction, at the $n$-th iteration, we have $|\alpha_T^{(n)}-\alpha_T^{*}|\leq \frac{1}{2^n}(\alpha_T^{\max}-\alpha_T^{\min})$. Above all,  we can say that $\left\{\alpha_T^{(n)}\right\}_{n=1}^{\infty}$ converges to $\alpha_T^*$ with a rate of convergence ${\cal{O}}\left(\frac{1}{2^n}\right)$. The most important factor in the convergence rate is the number of iterations. In other words, the proposed algorithm will take more iterations when the convergence rate is low. Thus it is important to calculate the total number of iterations until the convergence.
Note that the computational complexity of Algorithm 1 comes from two aspects: one is from the iterations to find the optimal $\alpha_T^*$, and the other one is the complexity to solve the convex feasibility problem in each iteration for fixed $\alpha_T$. First, note that the searching range of the bi-section search approach determined by $\alpha_T^{\min}$ and $\alpha_T^{\max}$ will reduce by half after each iteration. Thus, for a given accuracy $\epsilon$, the total number of iterations of Algorithm 1 is $\log_2(\frac{\alpha_T^{\max}-\alpha_T^{\min}}{\epsilon})$. For the convex feasibility subproblem, an $\epsilon_2$-optimal solution can be obtained by the so-called ellipsoid method with $\log_2(\frac{1}{\epsilon_2})$ number of iteration. Therefore, the total computation complexity of Algorithm 1 is given by 
	\begin{equation}
		{\cal{O}}\left\{\log_2\left(\frac{\alpha_{\max} - \alpha_{\min}}{\epsilon}\right)\cdot \log_2\left(\frac{1}{\epsilon_2}\right)\right\}.
	\end{equation}
To demonstrate the feasibility of Algorithm \ref{Al1} in practical, we compare the complexity of the proposed algorithm with that of the uploading delay minimization (UDM) algorithm in \cite{YWuJSTSP2019}. Note that the complexity of UDM algorithm in \cite{YWuJSTSP2019} is ${\cal{O}}\left\{2I\times\log_2\left(\frac{x^{\text{upperbound}} - 1/T^{\text{up},\max}}{\epsilon}\right)\right\}$, where $I$ is the number of users and $x=\frac{1}{t^{\text{up}}}$ and where $t^{up}$ is the uploading time. In this work, if we set $\epsilon_2=0.001$, the complexity of Algorithm 1 is ${\cal{O}}\left\{10\times\left\{\log_2\left(\frac{\alpha_{\max} - \alpha_{\min}}{\epsilon}\right)\right\}\right\}$. Assume that the iteration for bi-section search are similar, i.e., $\log_2\left(\frac{\alpha_{\max} - \alpha_{\min}}{\epsilon}\right)\approx\log_2\left(\frac{x^{\text{upperbound}} - 1/T^{\text{up},\max}}{\epsilon}\right)$, the complexity of the proposed scheme in this work will be lower than the complexity of that of UDM algorithm in [21] when the user number is larger than five. Therefore, the proposed algorithm is more suitable for the network with large number of MEC users.

\section{Closed-Form Optimal Solution Derivation for the Two-user case} 
To further reduce the decoding complexity of SIC in the proposed BSS iterative algorithm, in this section, we focus on the two-user case and derive a closed-form solution, based on the insights and propositions obtained from problem  \eqref{Prob:T_min_2}. To further simplify the problem in \eqref{Prob:T_min_2}, the following proposition is provided.
 
\begin{prop} \label{T_loc=T_off}
In order to minimize the maximum task delay of each user, i.e.,
	\begin{equation}\label{T_m^loc}
	\mathop {\min\ \max }\limits_{\{\beta_m,p_m\}}\left\{T_{m}^{off},\ T_m^{loc}\right\}\  \forall m,
	\end{equation}
	 the offloading task ratio and powers will be optimized to the optimal solution $\{\beta_m^*,p_m^*\}$, which makes its offloading time equals to its local computing time, i.e., $T_m^{off}=T_m^{loc}$.
\end{prop}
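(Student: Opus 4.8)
The plan is to exploit the opposite monotonicities of the two terms inside the maximum and then argue by contradiction. First I would record the structural facts that make everything work: for fixed transmit power, the offloading time $T_m^{off}=\beta_m L_m/R_m$ in \eqref{T_m_off} is strictly increasing in the offloading ratio $\beta_m$ (it is linear in $\beta_m$, with $R_m$ independent of the $\beta$'s) and strictly decreasing in $p_m$ because $R_m$ is increasing in $p_m$; meanwhile the local computing time $T_m^{loc}=(1-\beta_m)L_mC_m/f_m^{loc}$ in \eqref{T_m_loc} is strictly decreasing in $\beta_m$ and independent of $p_m$. Hence, viewed as functions of $\beta_m$ alone, $T_m^{off}(\cdot)$ rises from $0$ while $T_m^{loc}(\cdot)$ falls to $0$, so they cross exactly once on $(0,1)$, and the pointwise maximum of a strictly increasing and a strictly decreasing function is uniquely minimized at that crossing.

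Second, I would turn this into a contradiction argument at an optimum $\{\beta_m^*,p_m^*\}$. Suppose $T_m^{off}\neq T_m^{loc}$ there. If $T_m^{off}>T_m^{loc}$, the objective equals $T_m^{off}$, and decreasing $\beta_m$ by a small $\delta>0$ strictly decreases $T_m^{off}$ while increasing $T_m^{loc}$; for $\delta$ small enough the inequality $T_m^{off}>T_m^{loc}$ is preserved, so the maximum strictly decreases, contradicting optimality. Symmetrically, if $T_m^{off}<T_m^{loc}$ the objective equals $T_m^{loc}$, and increasing $\beta_m$ by a small $\delta>0$ strictly decreases $T_m^{loc}$ while keeping it the active term, again lowering the maximum. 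In both cases $\beta_m^*$ stays in $(0,1)$ because the active term is strictly positive, so the perturbed point respects \eqref{eq:beta_con}.

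The part that needs the most care, and where I expect the real work to be, is feasibility with respect to the energy budget \eqref{eq:E_m_con}, since a perturbation in $\beta_m$ changes the local energy $\kappa_m(1-\beta_m)L_mC_m(f_m^{loc})^2$ and the offloading energy $T_m^{off}p_m$ in opposite directions, so one cannot immediately conclude that $E_m$ stays below $E_{\max}$. If the energy constraint is slack at the optimum, a sufficiently small $\delta$ keeps it slack and the argument above is complete. The delicate case is when \eqref{eq:E_m_con} is tight: here I would move along the energy-constraint surface by coupling the $\beta_m$ perturbation with a compensating change in $p_m$ (using the strict monotonicity of $T_m^{off}$ in $p_m$ to restore $E_m=E_{\max}$), and then verify that this feasible tangential move still strictly reduces the active term of the objective. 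Establishing that a strictly improving feasible direction always exists whenever $T_m^{off}\neq T_m^{loc}$ is the crux; once it is in hand, the uniqueness of the crossing point from the first paragraph pins down $\{\beta_m^*,p_m^*\}$ as the solution that equalizes the offloading and local computing times.
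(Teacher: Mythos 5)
Your proposal is correct and follows essentially the same route as the paper's Appendix D: a proof by contradiction exploiting that $T_m^{off}$ is increasing and $T_m^{loc}$ is decreasing in $\beta_m$, so whenever the two are unequal an intermediate $\hat{\beta}_m$ equalizing them strictly lowers the maximum. In fact you are more careful than the paper on the one delicate point, feasibility of the perturbation under the energy budget \eqref{eq:E_m_con} when it is tight (the paper merely observes that $E_m^{off}$ and $E_m^{c}$ move in opposite directions as $\beta_m$ changes and asserts feasibility), and your suggestion of coupling the $\beta_m$ move with a compensating adjustment of $p_m$ along the constraint surface is exactly the detail the paper's argument leaves implicit.
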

\begin{proof}
	A Proof by contradiction is provided in Appendix \ref{T_m_loc=T_m_off}. 
\end{proof}

Based on Proposition \ref{T_loc=T_off}, we conclude that the optimal solution to problem \eqref{Prob:T_min_2} can be obtained when $T_m^{off}=T_m^{loc}$ for each user. Furthermore, according to Proposition \ref{T1=T2}, the optimal solution can be achieved only when the offloading time equals to each other. Considering the two-user case $|h_1|\leq|h_2|$, problem \eqref{Prob:T_min_2} can be written as
\begin{subequations}\label{Prob:T_min_2_2}
	\begin{align}
	\mathop {\min }\limits_{ \{\beta _{1},\beta _{2},p_{1}, p_{2} \}}  &\frac{\beta_1L_1+\beta_2L_2}{R(p_1,p_2)} \\
	\text{s.t.}  \quad  &\beta_{1}\in [0,1],\  \forall m=\{1,2\}, \\   
	&0\leq p_{1}\leq P_{\max}, \ \forall m=\{1,2\} \label{p_range}\\  
	&\kappa_m(1- \beta_m) L_mC_m (f_1^{loc})^2\\&+\frac{\beta_1L_1+\beta_2L_2}{R(p_1,p_2)}p_m \leq E_{\max},\  m=\{1,2\} \label{eq:E1}\\        
    &\frac{(1-\beta_m)L_mC_m}{f_1^{loc}}=\frac{\beta_1L_1+\beta_2L_2}{R(p_1,p_2)},\ m=\{1,2\}\label{eq1}\\
    &\frac{\beta_1L_1}{B\log_2(1+|h_1|^2p_1)}=\frac{\beta_1L_1+\beta_2L_2}{R(p_1,p_2)}\label{eq3}.
	\end{align}
\end{subequations}
where $R(p_1,p_2)=B\log_2(1+|h_1|^2p_1+|h_2|^2p_2)$. Note that the condition of Proposition \ref{T_loc=T_off} is that the user energy $E_{\max}$ is large enough to enable its offloading time to be equal to its local computing time. By applying this proposition into problem reformulation, the feasibility of the retransformed problem needs to be considered. 
If the problem is infeasible, the proposed BSS algorithm can be applied to solve the problem. In this case, each user's local computing time may not be equal to its offloading time. If the problem is feasible, the proposed optimal solution can be used to address the delay minimization problem.

 In problem \eqref{Prob:T_min_2_2}, the objective function is quasi-convex, as concluded from Proposition \ref{quasi}. However, constraint \eqref{eq:E1} is not convex set with respective to $\{\beta _{1},\beta _{2},p_{1}, p_{2}\}$. To simplify this problem, we first deal with equality constraints \eqref{eq1}-\eqref{eq3}. To solve the above problem and obtain the global optimum, we first equally transform this problem to an equivalent convex form via equality constraints. By using the equation \eqref{eq3}, we can replace the right sides of \eqref{eq1} with the left side of \eqref{eq3}. Then we have
 \begin{subequations}\label{equalities}
 		\begin{align}
	&(1-\beta_1)L_1C_1=\frac{\beta_1L_1}{B\log_2(1+|h_1|^2p_1)}f_1^{loc} \label{eq:beta11}\\
&(1-\beta_1)L_1C_1/f_1^{loc}=(1-\beta_2)L_2C_2/f_2^{loc}\\
&(1-\beta_1)L_1C_1/f_1^{loc}=\frac{\beta_1L_1+\beta_2L_2}{B\log_2(1+|h_1|^2p_1+|h_2|^2p_2)}. 
	\end{align}
\end{subequations}
After a series of calculations, 
problem \eqref{Prob:T_min_2_2} can be rewritten as 
\begin{subequations}\label{Prob:p1,p2}
	\begin{align}
	\mathop {\min }\limits_{ \{p_{1},p_{2}\}} \quad &\frac{L_1+L_2}{\frac{f_1^{loc}}{C_1}+\frac{f_2^{loc}}{C_2}+B\log_2(1+|h_1|^2p_1+|h_2|^2p_2)}\\
	\text{s.t.}  \quad&\eqref{p_range}-\eqref{eq:E1}.    
	\end{align}
\end{subequations}
\begin{prop} The problem \eqref{Prob:p1,p2} is convex.
\end{prop}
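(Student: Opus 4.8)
The plan is to establish convexity by the two standard checks: the objective is a convex function of $(p_1,p_2)$, and every constraint describes a convex set. The single fact that drives the whole argument is that the denominator appearing in problem \eqref{Prob:p1,p2},
\[
D(p_1,p_2):=\frac{f_1^{loc}}{C_1}+\frac{f_2^{loc}}{C_2}+B\log_2\!\left(1+|h_1|^2p_1+|h_2|^2p_2\right),
\]
is concave and strictly positive on the feasible box. Concavity holds because $1+|h_1|^2p_1+|h_2|^2p_2$ is affine with nonnegative coefficients, $\log_2(\cdot)$ is concave and nondecreasing, and adding the positive constants $\tfrac{f_1^{loc}}{C_1}+\tfrac{f_2^{loc}}{C_2}$ preserves concavity; positivity holds because those constants are strictly positive while the logarithmic term is nonnegative. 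I would prove and isolate this fact first, since both the objective and the delicate constraint reduce to it.

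For the objective, I would observe that $L_1+L_2>0$ is a constant, so the objective is $\tfrac{L_1+L_2}{D(p_1,p_2)}$. The scalar map $\phi(t)=1/t$ is convex and nonincreasing on $(0,\infty)$, and the composition rule of \cite{BoydConv2004} states that a convex nonincreasing function composed with a concave (strictly positive) function is convex. Applying this with $\phi$ and $D$, and scaling by the positive constant $L_1+L_2$, shows the objective is convex over the feasible region.

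It remains to treat the constraints, and this is where the real work — and the expected obstacle — lies. The box constraints \eqref{p_range} are affine, hence trivially convex. The energy constraint \eqref{eq:E1}, read as a ratio, looks nonconvex. I would first use the equality constraints \eqref{equalities} to eliminate $\beta_1,\beta_2$: since $T_m^{loc}=T_m^{off}$ coincide with the common completion time $T=\tfrac{L_1+L_2}{D}$, one has $(1-\beta_m)L_mC_m=f_m^{loc}T$, whence $E_m^{c}=\kappa_m(f_m^{loc})^3T$ and $E_m^{off}=Tp_m$, so the energy constraint becomes
\[
\frac{(L_1+L_2)\bigl(\kappa_m(f_m^{loc})^3+p_m\bigr)}{D(p_1,p_2)}\le E_{\max}.
\]
The decisive step is to clear the denominator: because $D>0$ on the feasible set, this inequality is \emph{equivalent} to
\[
(L_1+L_2)\bigl(\kappa_m(f_m^{loc})^3+p_m\bigr)-E_{\max}\,D(p_1,p_2)\le 0.
\]
The left-hand side is affine in $p_m$ (hence convex) minus $E_{\max}$ times the concave function $D$; since $E_{\max}>0$, the term $-E_{\max}D$ is convex, so the left-hand side is convex and its sublevel set is convex.

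Putting the pieces together, problem \eqref{Prob:p1,p2} minimizes a convex objective over the intersection of the convex box and the convex energy sublevel sets (with $\beta_1,\beta_2$ already removed via the equalities), so it is convex. The main difficulty to flag is exactly the energy constraint: in its natural ratio form it is not manifestly convex, and the resolution is the cross-multiplication above, which is legitimate precisely because $D$ is concave and strictly positive, turning the constraint into an affine-minus-concave, and therefore convex, inequality.
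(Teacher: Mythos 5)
Your proof is correct. The paper itself omits the convexity proof (``due to the limited space''), so there is no argument to compare against line by line; however, your key step --- eliminating $\beta_1,\beta_2$ via the equalities and clearing the positive concave denominator $D$ to rewrite the energy constraint as $(L_1+L_2)\bigl(\kappa_m(f_m^{loc})^3+p_m\bigr)-E_{\max}D(p_1,p_2)\le 0$ --- is exactly the form in which the paper states the primal feasibility conditions \eqref{primalE1}--\eqref{primalE2} in Appendix~\ref{OptimalDerivation2}, so your argument is consistent with the reasoning the paper relies on implicitly. The composition rule for the objective ($1/t$ convex nonincreasing on $(0,\infty)$ composed with the concave, strictly positive $D$) and the affine box constraints complete the proof; no gaps.
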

\begin{proof}
 The convexity proof is omitted due to the limited space.
\end{proof}
 According to equations \eqref{eq1}-\eqref{eq3}, once the optimal $p_1^*$ and $p_2^*$ are obtained, the optimal $\beta_1^*$ and $\beta_2^*$ can be calculated through:
\begin{subequations}
	\begin{align}
	&\beta_1^*=\frac{B\log_2(1+|h_1|^2p_1^*)}{\frac{f_1^{loc}}{C_1}+B\log_2(1+|h_1|^2p_1^*)}\\
	&\beta_2^*=1-\frac{(1-\beta_1^*)L_1C_1f_2^{loc}}{L_2C_2f_1^{loc}}\\ \nonumber
	\end{align}
\end{subequations}

In the following, we focus on deriving the optimal closed-form expressions for $p_1^*$ and $p_2^*$. Since the problem in \eqref{Prob:p1,p2} is convex and satisfies Slater's condition, the KKT conditions can be exploited to derive the optimal closed-form solution by the following four cases.

\emph{\underline{Case 1}}: When \begin{equation}
\left\{
\begin{aligned}
&  P_{1,w}(p_2^*=P_{\max}) \geq P_{\max} \quad  & \\
&  P_{2,w}(p_1^*=P_{\max}) \geq P_{\max} &
\end{aligned}
\right.
\end{equation}
Define
\begin{equation}
P_{1,w}(p_2^*)= -\frac{\mathcal{W}_0\left(-\frac{B_1\log(2)}{|h_1|^2}2^{\left(-\frac{B_1}{|h_1|^2}+A_1\right)}\right)}{B_1\log(2)}-\frac{1+|h_2|^2p_2^*}{|h_1|^2},\label{p1range_p1}
\end{equation} 
where
$A_1=\frac{\kappa_1a_1(f_1^{loc})^3}{E_{\max}B}-\frac{b_1}{B}$, $B_1=\frac{a_1}{E_{\max}B}$, $a_1=L_1+L_2$ , $b_1=\frac{f_1^{loc}}{C_1}+\frac{f_2^{loc}}{C_2}$ and where $\mathcal{W}_0(\cdot)$ is Lambert $\mathcal{W}$ function, which is a single value function. 

 \begin{equation}
 P_{2,w}(p_1^*)= -\frac{\mathcal{W}_0\left(-\frac{B_2\log(2)}{|h_2|^2}2^{\left(-\frac{B_2}{|h_2|^2}+A_2\right)}\right)}{B_2\log(2)}-\frac{1+|h_1|^2p_1^*}{|h_2|^2}\label{p2range_p1}
 \end{equation} 
 where $A_2=\frac{\kappa_2a_1(f_2^{loc})^3}{E_{\max}B}-\frac{b_1}{B}$ and $B_2=\frac{a_1}{E_{\max}B}$. 
Thus we have
\begin{equation}\label{Case1Opt}
\left\{
\begin{aligned}
&p_1^*= P_{\max} \quad  &\\
&p_2^*=P_{\max}. \quad   &
\end{aligned}
\right.
\end{equation}

\emph{\underline{Case 2}}: When \begin{equation}
\left\{
\begin{aligned}
&  P_{1,w}(p_2^*=P_{\max}) \geq P_{\max}\quad  & \\
& P_{2,w}(p_1^*=P_{\max}) \leq  P_{\max}, &
\end{aligned}
\right.
\end{equation}
thus we have
\begin{equation}\label{Case2Opt}
\left\{
\begin{aligned}
&p_1^*= P_{\max} \quad  &\\
&p_2^*=P_{2,w}(p_1^*=P_{\max}). \quad   &
\end{aligned}
\right.
\end{equation}

\emph{\underline{Case 3}}: When \begin{equation}
\left\{
\begin{aligned}
& P_{1,w}(p_2^*=P_{\max})\leq P_{\max} \quad  & \\
& P_{2,w}(p_1^*=P_{1,w}) \geq P_{\max}, &
\end{aligned}
\right.
\end{equation}
thus we have
\begin{equation}\label{Case3Opt}
\left\{
\begin{aligned}
&p_1^*= P_{\max} \quad  &\\
&p_2^*=P_{2,w}(p_1^*=P_{\max}). \quad   &
\end{aligned}
\right.
\end{equation}

\emph{\underline{Case 4}}: When \begin{equation}
\left\{
\begin{aligned}
& P_{1,w}(p_2^*)\leq P_{\max} \quad  & \\
& P_{2,w}(p_1^*)\leq P_{\max}, &
\end{aligned}
\right.
\end{equation}
holds that
\begin{equation}\label{Case4}
\left\{
\begin{aligned}
p_1^*=&\kappa_2(f_2^{loc})^3-\kappa_1(f_1^{loc})^3+p_2^*\\
p_2^*=&-\frac{\mathcal{W}_0\left(-\frac{B_2\log(2)}{(|h_1|^2+|h_2|^2)}2^{\left(-\frac{B_2}{(|h_1|^2+|h_2|^2)}+A_2\right)}\right)}{B_2\log(2)}\\
&-\frac{1+|h_1|^2\left(\kappa_2(f_2^{loc})^3-\kappa_1(f_1^{loc})^3\right)}{(|h_1|^2+|h_2|^2)}, 
\end{aligned}
\right.
\end{equation}
where  $A_2=\frac{\kappa_2a_1(f_2^{loc})^3}{E_{\max}B}-\frac{b_1}{B}$ and $B_2=\frac{a_1}{E_{\max}B}$. 
\begin{proof}
The proof is provided in Appendix \ref{OptimalDerivation2}.
\end{proof}
From the above solution, we can directly obtain the optimal solution for the two-user case NOMA-MEC networks based on the parameters of channel gains and computing capabilities of users. The complexity is significantly reduced compared with the proposed Algorithm \ref{Alg1}. Note that the closed-form solution is extremely challenging to obtain for the multi-user case in NOMA-MEC. However, the closed-form solution can be applied to the scenario with a large number of users. In this case, user clustering/grouping can be implemented via matching theory or game theory or machine learning. Specifically, multiple users can be grouped into multiple clusters, and each cluster contains two users, where the closed-form solution can be applied.


\section{Task Delay minimization for limited computing resources MEC servers}
In this section, we consider a more complex scenario, where the MEC server at the BS has limited computing resources. This indicates that the computation time for the same task will be longer than that of the scenario with the considerable computation resource. In this case, the computation time at the BS cannot be ignored. After offloading $\sum\limits_{m=1}^{m=M}\beta_mL_m$ tasks to the MEC server, then the computing time at MEC server is given by
  \begin{equation}\label{T_m_loc2}
 \begin{aligned}
 T_{S}^{c}=\frac{\sum\limits_{m=1}^{M}\beta_mL_mC_S}{f_{S}},
 \end{aligned}
 \end{equation}
 where $C_S$ denotes the numbers of CPU cycles required to compute 1 bit, and $f_S$ denotes the CPU frequency (in cycles per second). Note that $T_S^c$ is a function of $\beta_m$ and $f_S$, which will increases when the task ratio increases and decreases when $f_S$ increases. When we consider $T_S^c$ into delay minimization, it will affect the offloading scheme.
Thus the energy consumption of computation at the MEC server can be written as
\begin{equation}\label{E_m_comp}
\begin{aligned}
E_{S}^{c}=\kappa_S\sum\limits_{m=1}^{m=M}\beta_mL_m (f_S)^2,
\end{aligned}
\end{equation}
where $\kappa_S$ denotes the effective capacitance coefficient for each CPU cycle of the MEC server. Considering pure NOMA offloading MEC networks, the task delay minimization can be written as 
\begin{subequations}\label{Prob:T_min_0}
	\begin{align}
	\mathop {\min \quad \max }\limits_{ \{\bm{\beta},\bm{p}\} }  &\left\{\tilde{T}_m^{off}+T_{S}^c,T_m^{loc},\forall m\right\}  \\
	\text{s.t.}  \quad \quad \quad &\beta_{m}\in [0,1], \forall m, \label{eq0:beta_con_2}\\   
	&0\leq p_{m}\leq P_{\max},\forall m,\label{eq0:p_con_2}\\ 
	&E_m\leq E_{\max},\forall m,\label{eq0:E_1_con}
	\end{align}
	\end{subequations}
	where \begin{equation}
		\tilde{T}_m=\tilde{T}_m^{off}+T_{S}^c.
	\end{equation}
	The proposed BSS algorithm can be still applied for this scenario, since the quasi-convexity of Problem \eqref{Prob:T_min_0} can be proved by similar steps as in Appendix \ref{QuasiProof}.
	
	For the two-user case, the transformed objective function can be written as 
	\begin{subequations}\label{Prob:p1,p2,new}
	\begin{align}
	\mathop {\min }\limits_{ \{p_{1},p_{2}\}} \quad &\frac{L_1+L_2}{\frac{f_1^{loc}}{C_1}+\frac{f_2^{loc}}{C_2}+\frac{1}{1/R+C_S/f_S}} \label{Objnew1}\\ 
	\text{s.t.}  \quad&\eqref{p_range}-\eqref{eq:E1}.    
	\end{align}
\end{subequations}
Since the objective function \eqref{Objnew1} is monotonically decreasing with $R$, solving Problem \eqref{Prob:p1,p2,new} is equivalent to replace the objective with $R$. The derived solution $p_1^*,p_2^*$ can still be applied to this scenario with the minimum delay \begin{equation}
	T_{\min}=\frac{L_1+L_2}{\frac{f_1^{loc}}{C_1}+\frac{f_2^{loc}}{C_2}+\frac{1}{1/R^*+C_S/f_S}}
\end{equation}
where $R^*=R(p_1^*,p_2^*)$. The performance of this solution is demonstrated in Fig. 2.

\section{Simulation Results and Discussion}
In this section, the performance of the proposed resource allocation schemes is evaluated by simulations. The proposed BSS algorithm and the optimal analyzed solution are compared with three benchmark schemes: 1. the orthogonal frequency-division multiple access (OFDMA) based partial scheme, where users offload their tasks by the OFDMA transmission. 2. the NOMA based full offloading scheme, where users offload its entire task to the BS for the remote computation via NOMA transmission. 3. the fully local computing scheme, where users compute the entire task locally. In the simulations settings, all users are randomly distributed in a single cell with a radius of 500 m. The channel gain from the $m$-th user to the BS is denoted by $h_{m}=g_{m}{(1+d_{m}^\alpha)^{-\frac{1}{2}}}$, where $g_{m}$ is a Rayleigh fading channel coefficient and $d_m$ is the distance from this user to the BS. The path loss factor $\alpha$ is 3.76. The AWGN power is $\sigma^2=BN_0$ where the AWGN spectral density is $N_0= -174$ dBm/Hz. For the computational resource at each user, we set $C_m=10^3, \forall m$ cycles per bit and $f_m=10^9, \forall m$ cycles per second.
\begin{figure}[t]
\centering
\graphicspath{{./figures/}}
\includegraphics[width=0.95\linewidth]{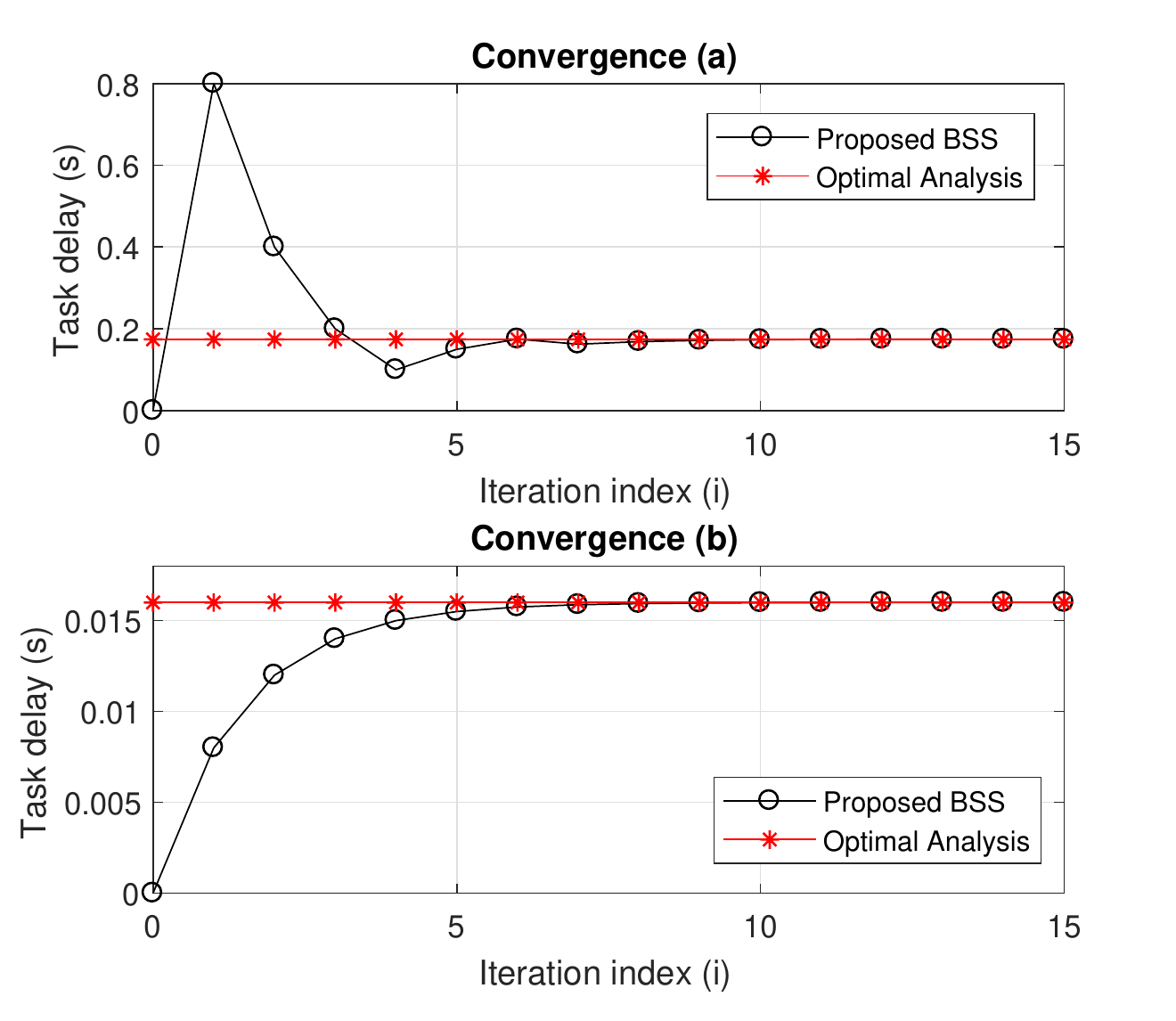}\\
	\caption{The convergence and optimality of Algorithm 1.} \label{Fig1}
\end{figure}
\begin{figure}[t]
\centering
	\graphicspath{{./figures/}}\includegraphics[width=0.95\linewidth]{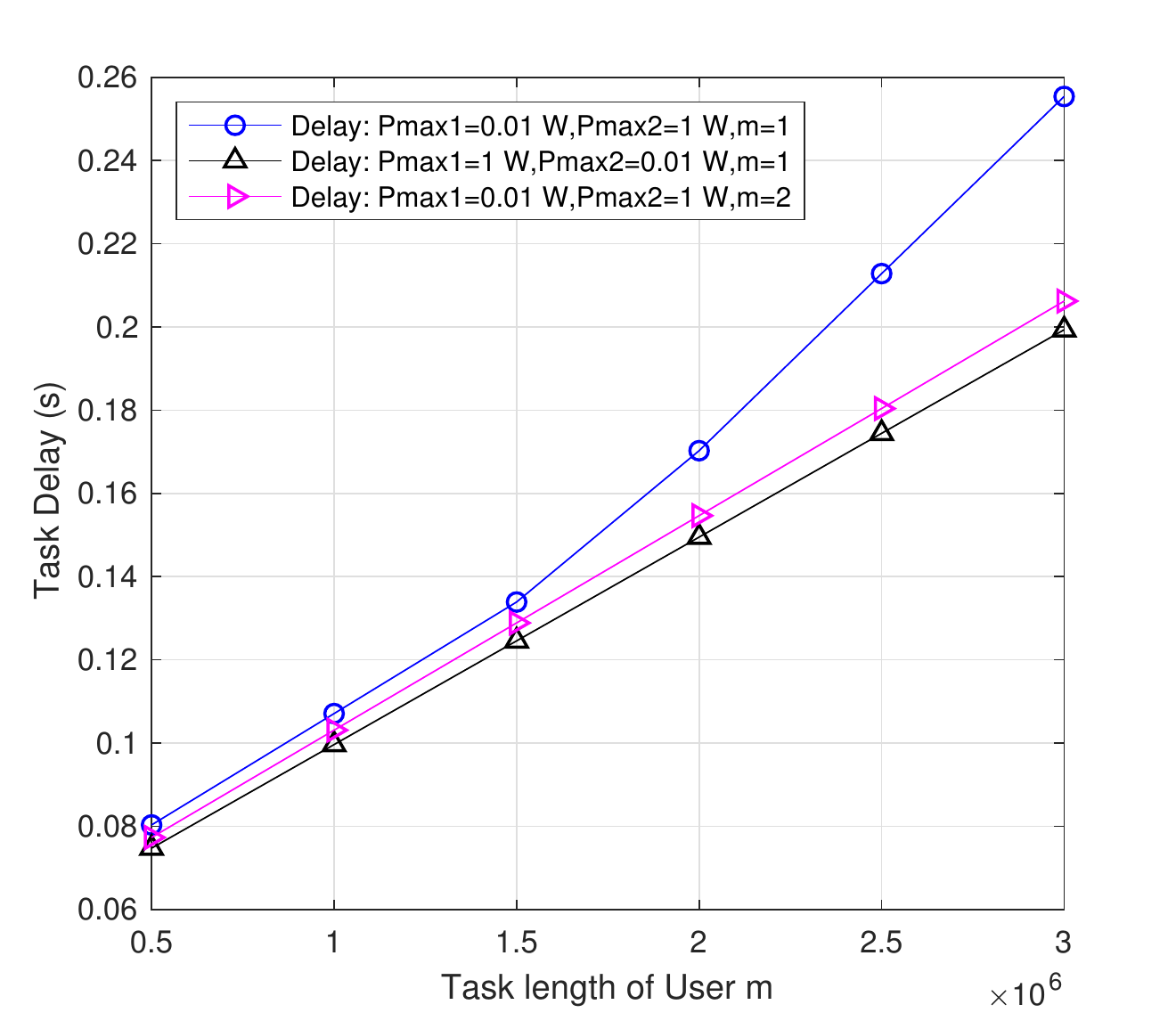}\\
	\caption{Delay comparison versus the task length of $U_m$.} \label{Fig55}
\end{figure}

 Fig. \ref{Fig1} shows the convergence and optimality of the proposed BSS algorithm. In Fig. \ref{Fig1}(a), we set the bandwidth to $B=1$ MHz. The length of computation tasks for each user is set to $L_1=L_2=1.6\times10^6$ bits. The maximum energy consumption for each user is $E_{\max}=0.2$ Joule, and the maximum power for each user is  $P_{\max}=0.01$ W, and the effective capacitance coefficient for each CUP cycle of the local users are $\kappa_m=10^{-28}\times[10,1]$. We initialize the lower bound on the delay to zero and the upper bound on the delay to the maximum fully local computing time. In Fig. \ref{Fig1}(a), there is a fluctuation before the convergence because the optimal delay is close to the lower bound. In Fig. 1(b), we set the bandwidth to $B=1$ Hz, and the length of computation tasks for each user is set to $L_1=L_2=1.6\times10^4$ bits. Since the offloading data rate is low due to the small bandwidth, most of the tasks will be computed by local users. In this case, the optimal delay is close to the upper bound. As a result, the delay at the first iteration is much lower than the optimal value (almost the upper bound), and it keeps increasing by each iteration until its convergence. From this figure, we can see that the proposed BSS algorithm converges within 10 iterations, which indicates the proposed scheme is practical. Moreover, the convergence point is perfectly matched with the optimal analytical solution, which indicates the proposed algorithm is optimal.

Fig. \ref{Fig55} illustrates the relationship between the task delay and the task size of each user with different $P_{\max}$. The parameter setting is the same as that in Fig. \ref{Fig1}. In Fig. \ref{Fig55}, the delay of each scheme decreases when the task length increases. From this figure, we can see that the scheme with $P_{\max}=1$ W can achieve a smaller delay than that with lower $P_{\max}=0.01$ W when the task length of $U_1$ increases. Moreover, the scheme with $P_{\max1}=1$ W can achieve a smaller delay than the scheme with $P_{\max2}=1$ W. This is because $U_1$ has a higher channel gain than $U_2$.

\begin{figure}[t]
\centering
\graphicspath{{./figures/}}
\includegraphics[width=0.95\linewidth]{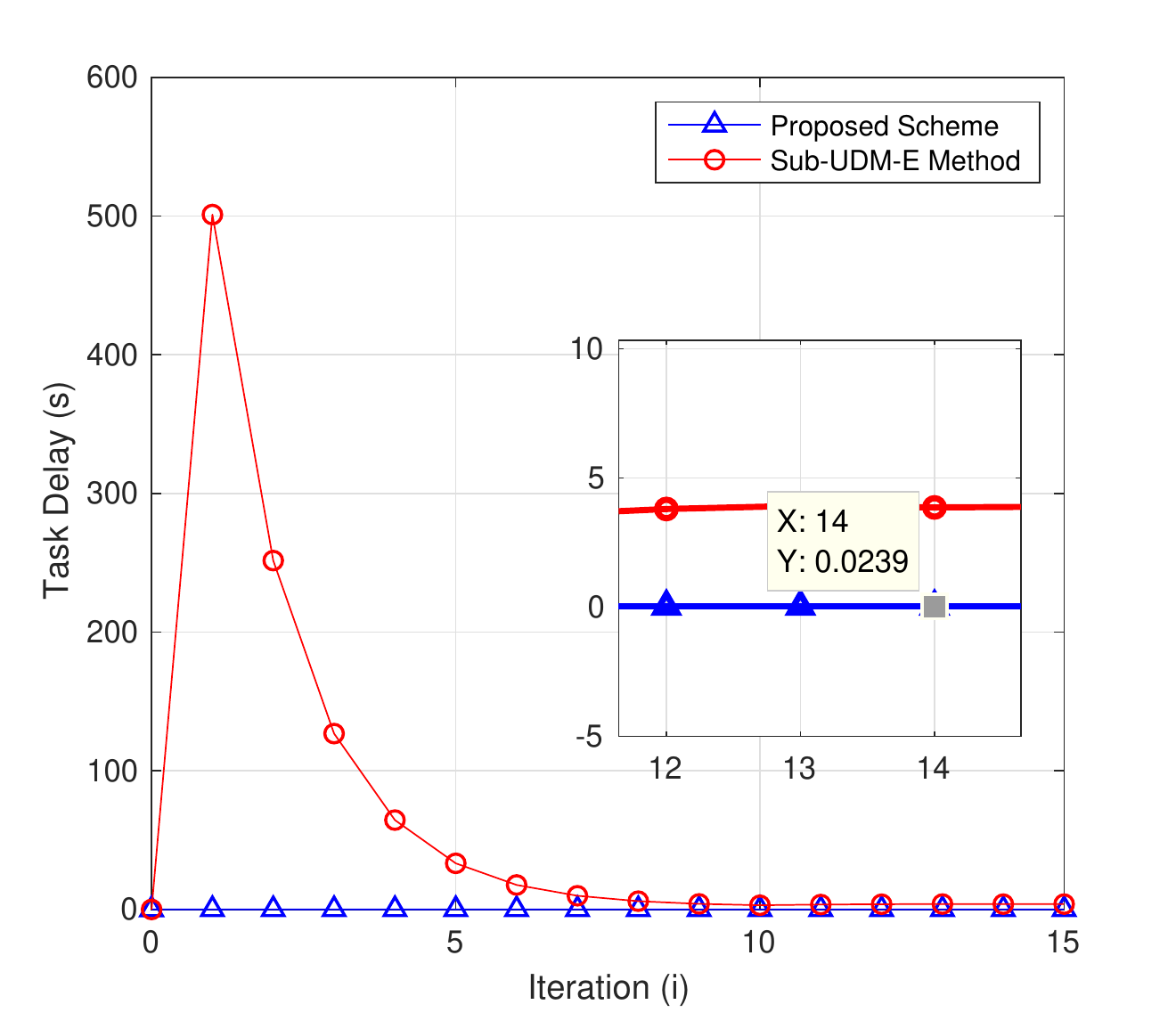}\\
	\caption{Delay comparison with the existing work \cite{YWuJSTSP2019}.} \label{Fig2}
\end{figure}
\begin{figure}[t]
\graphicspath{{./figures/}}
\centering
\includegraphics[width=0.95\linewidth]{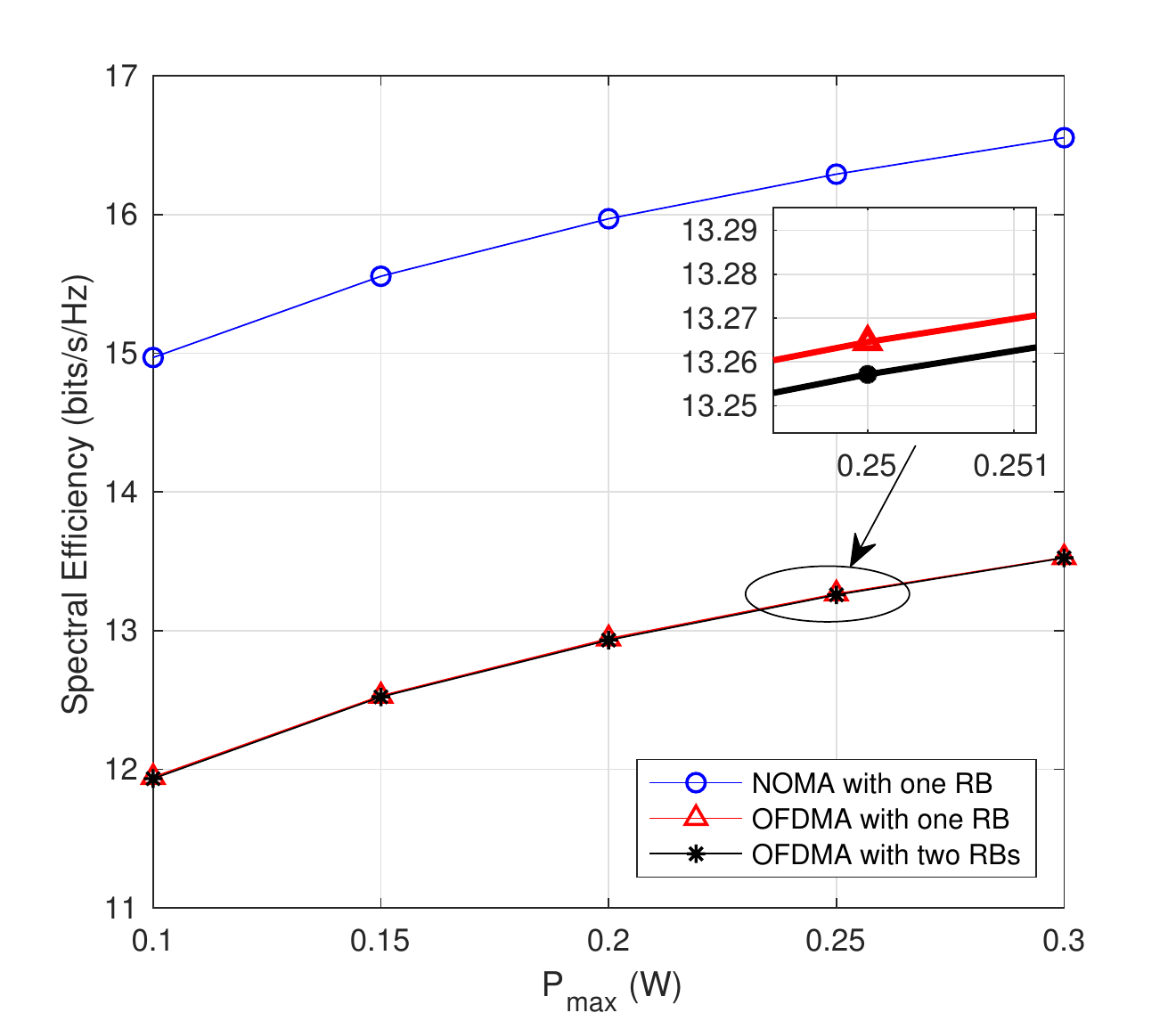}\\
	\caption{The offloading data rate comparison of NOMA based and OFDMA based schemes.} \label{FigSE}
\end{figure}

Fig. \ref{Fig2} presents the delay comparison of the proposed scheme with the UDM method in \cite{YWuJSTSP2019}. The length of computation tasks for each user is set to $L_1=L_2=1.6\times10^6$ bits. The maximum energy consumption for each user is $E_{\max}=0.2$ Joule, and the maximum power for each user is  $P_{\max}=0.01$ W, and the effective capacitance coefficients for each CUP cycle of the local users are $\kappa_m=10^{-28}\times[10,1]$. It can be observed that the proposed algorithm can achieve better performance on delay minimization. This is because the proposed method is optimal,  and Sub-UDM-E method in \cite{YWuJSTSP2019} only considered time optimization. In the proposed scheme, the task assignment and offloading power are optimized to minimize the task delay.
\begin{figure}[t]
\centering
\graphicspath{{./figures/}}
\includegraphics[width=0.95\linewidth]{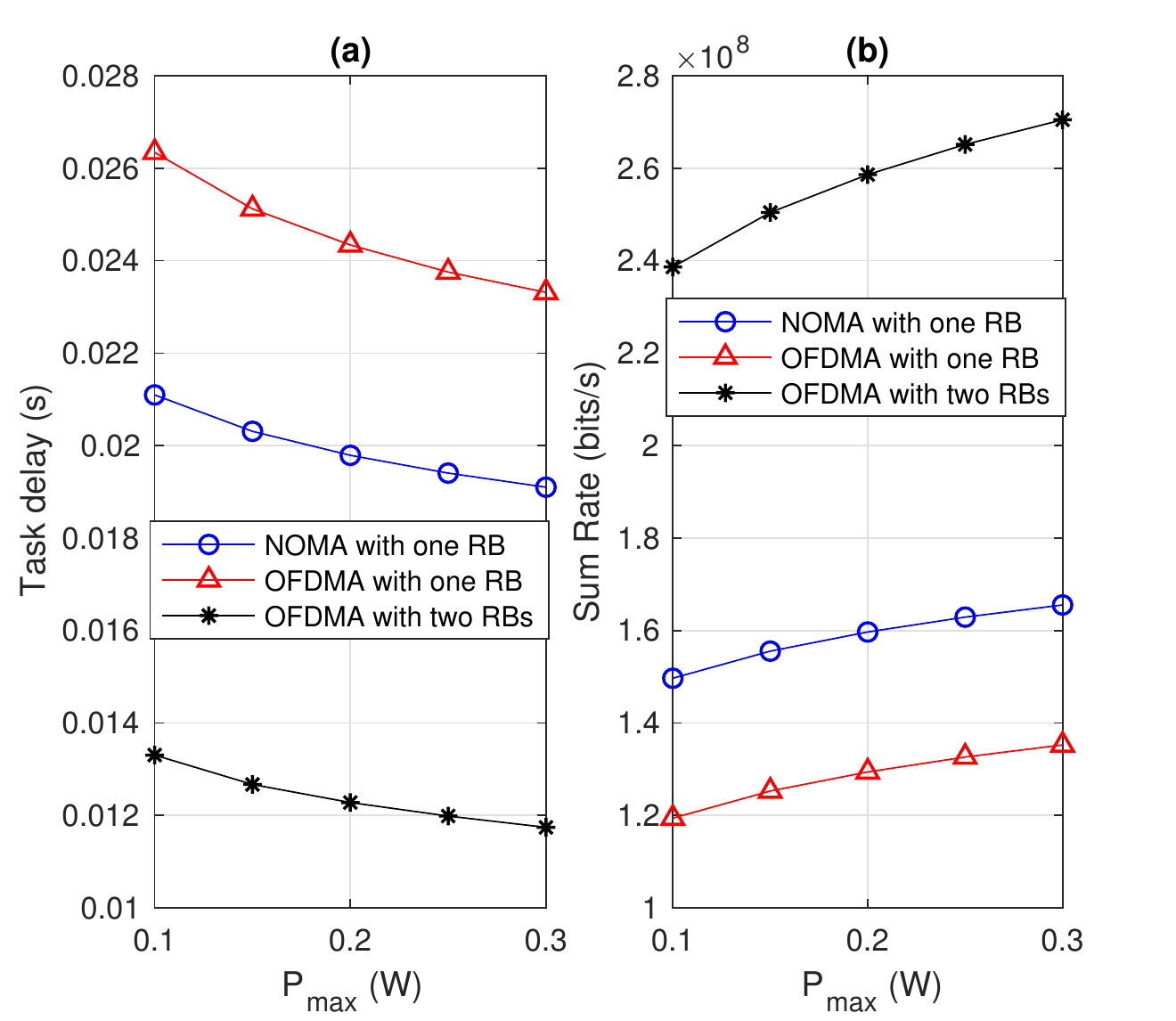}\\
	\caption{Task delay and sum rate comparison.} \label{FigTDSR}
\end{figure}
\begin{figure}
\centering	
\graphicspath{{./figures/}}
\includegraphics[width=0.95\linewidth]{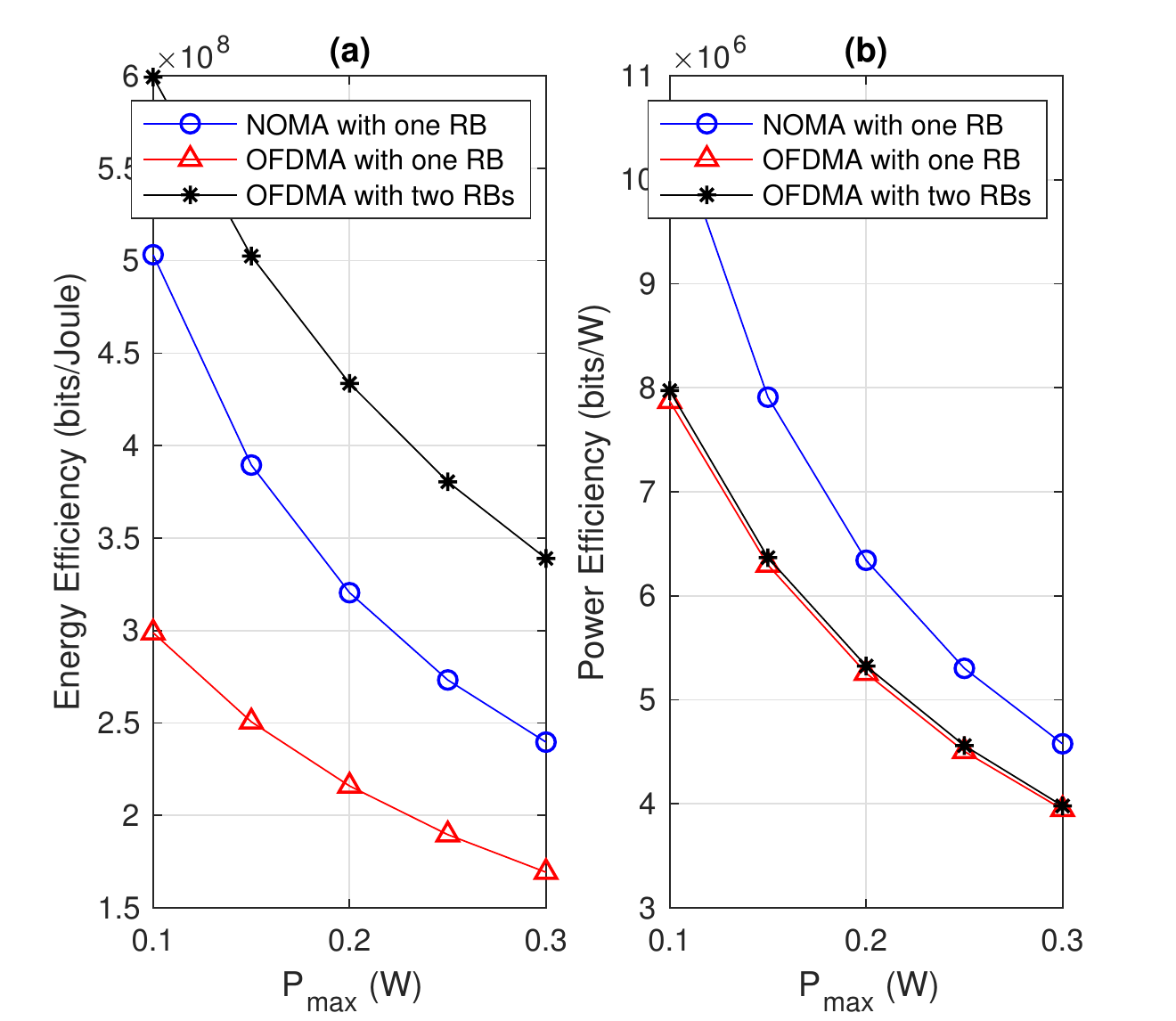}\\
	\caption{Energy efficiency and power efficiency comparison.} \label{FigEEPE}
\end{figure}
Fig. \ref{FigSE} illustrates the spectral efficiency comparison of NOMA offloading and OFDMA offloading schemes with different resource blocks (RBs) versus the maximum power. In this figure, we set $B=1$ MHz and $E_{\max}=0.2$ Joule. The tasks lengths of these two users are $L_1=L_2=1.6\times10^6$ bits. Considering the two-user case, we compare the spectral efficiency of the proposed delay minimization algorithm with three schemes, i.e., NOMA with one RB, OFDMA with one RB and OFDMA with two RBs. From this figure, we can see that NOMA with the proposed scheme yields higher spectral efficiency than the other two OFDMA schemes.

Fig. \ref{FigTDSR} shows the delay comparison and sum rate comparison of NOMA offloading and OFDMA offloading schemes versus the maximum power. The parameter setting is the same as that in Fig. \ref{FigSE}. From Fig. \ref{FigTDSR}, the OFDMA scheme with two RBs has the lowest delay and highest sum rate compared with the NOMA scheme with one RB and the OFDMA scheme with one RB. However, with one RB, the NOMA scheme always outperforms the OFDMA scheme in terms of task delay and sum rate.
 
In Fig. \ref{FigEEPE}, we provide the energy efficiency comparison and power efficiency comparison of NOMA offloading and OFDMA offloading schemes. The parameter setting is the same as that in Fig. \ref{FigSE}. In Fig. \ref{FigEEPE}(a), we set the constant power consumption of circuits as $Pc=0.1$ W. We can see that the energy efficiency of OFDMA with two RBs is better than that of NOMA with one RB. However, with the same RB, the energy efficiency of NOMA is better than OFDMA, which can be verified by \cite{FangIEEETrans16}. While in power efficiency (PE) comparison Fig. \ref{FigTDSR}(b), the NOMA scheme achieves the highest PE among these three schemes, and the OFDMA scheme with two RBs outperforms the scheme with one RB.

\begin{figure}[t]
\centering
\graphicspath{{./figures/}}
\includegraphics[width=0.95\linewidth]{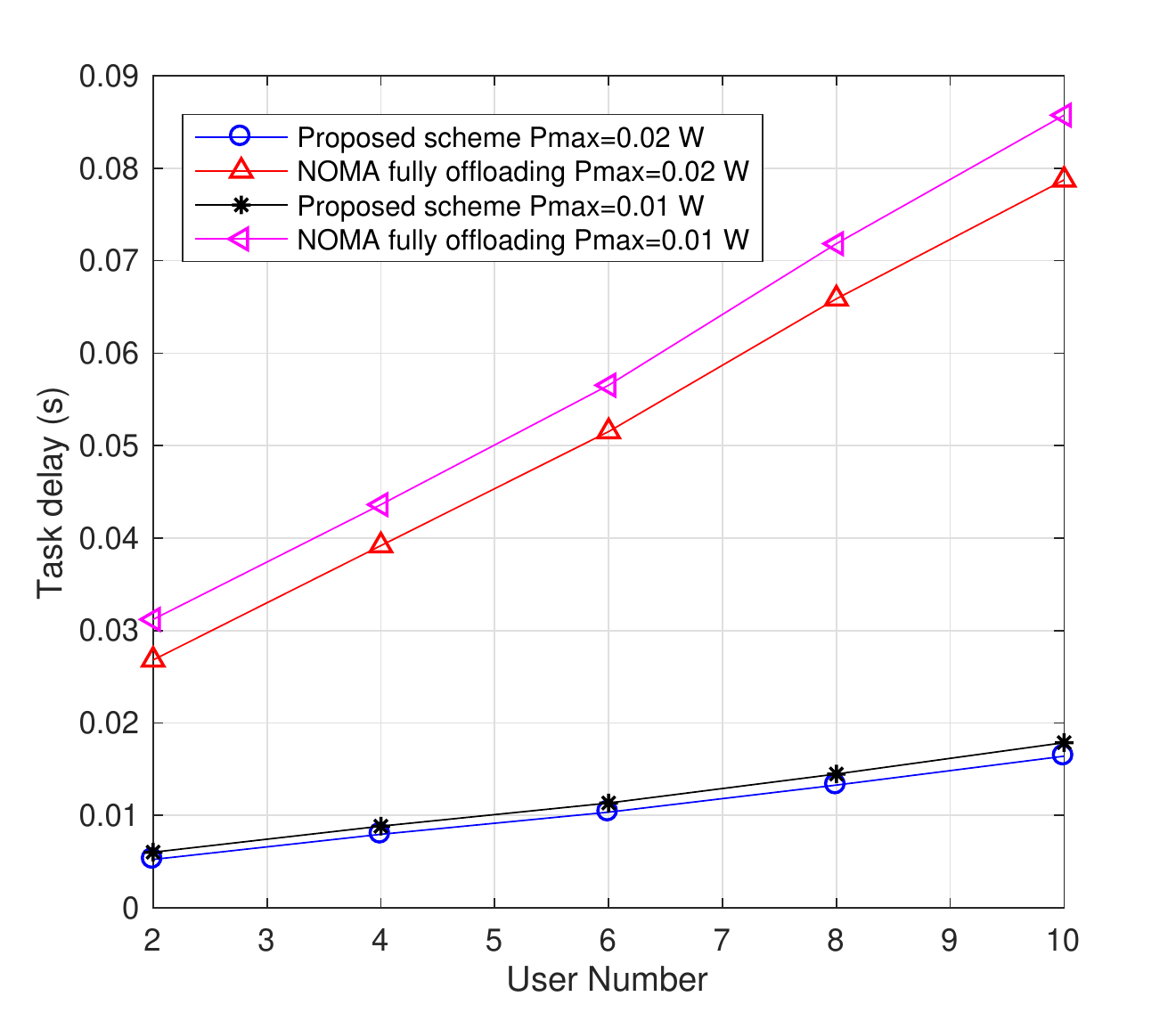}\\
	\caption{The task completion time versus user number with different powers.} \label{Fig7}
\end{figure}

Fig. \ref{Fig7} demonstrates the task completion time comparison between NOMA based partial offloading scheme and NOMA fully offloading scheme by considering different offloading power limits. In this figure, we set $B=1$ MHz and $L_m=1.6\times10^6$ bits, $E_{\max}=0.2$ Joule,  $P_{\max}=0.01$ W, and $\kappa_m=10^{-28}[10,1]$. In this figure, the task completion time increases when the user number increases. The scheme with a higher transmit power limit $P_{\max}=0.02$ W will cost less time than the scheme with $P_{\max}=0.01$ W. It also shows that NOMA based partial offloading scheme can provide better performance than the NOMA based fully offloading scheme.

\section{Conclusion}
In this paper, we have investigated the task delay minimization problem for NOMA enabled multi-user MEC networks. The task delay minimization problem has been formulated as a nonconvex one. By well utilizing the structure of the original problem, it has been equivalently transformed into a quasi-convex one. For the multi-user case, the BSS algorithm has been proposed to efficiently obtain the globally optimal solution. To further reduce the complexity and obtain more insights of the system, the delay minimization problem has been formulated for the two-user case. Closed-form optimal power allocation and offloading task ratio expressions have been derived for the two-user case based on analytic insights. Simulations have been provided to demonstrate the convergence and optimality of the proposed schemes, which can provide an effective solution to minimize task delay for a single-cell NOMA-MEC network. Furthermore, the solution of this work can also be applied to more complicated networks, e.g., multi-cell networks. In general, user grouping or user pairing via matching theory or game theory can be applied to decouple the multi-cell optimization problem into parallel subproblems, where the solution of this work can be applied. This study is out of the scope of this paper and will be treated as an important direction for future research.

\appendices
\section{Proof of Lemma \ref{Lemma1}} \label{Lemma1Proof}
We assume that $M$ users offload their tasks to the MEC server simultaneously through one subchannel. By applying the SIC technique at the receiver, the BS equipped with the MEC server decodes the signals in decreasing order of their channel gains,  
    $|h_{1}|^2\leq |h_{2}|^2 \leq \cdots \leq |h_{M}|^2$. 
We further assume that $M$ users transmit their tasks in the same period $T$, which indicates:
    \begin{equation}\label{T1=T2=T}
       T=T_{1}^{off}=T_{2}^{off}=\cdots=T_{M}^{off} 
    \end{equation}
   and
   \begin{equation}
       T=\frac{\beta_{1}L_1}{R_{1}} =\frac{\beta_{2}L_2}{R_{2}}=\cdots=\frac{\beta_{M}L_M}{R_{M}}.
    \end{equation}
    Since $\frac{a_1}{b_1}=\frac{a_2}{b_2}=\cdots=\frac{a_M}{b_M}$ can be written as $\frac{a_1}{b_1}=\frac{e_2a_1}{e_2b_1}=\cdots=\frac{e_Ma_1}{e_Mb_1}$ where $e_m=\frac{a_m}{a_1}=\frac{b_m}{b_1}$. Thus $\frac{a_1}{b_1}=\frac{a_1(1+e_2+e_3+\cdots+e_M)}{b_1(1+e_2+e_3+\cdots+e_M)}=\frac{a_1+a_2+\cdots+a_M}{b_1+b_2+\cdots+b_M}$. Therefore,
     \begin{equation}
       T=\frac{\beta_{1}L_1+\beta_{2}L_2+\cdots+\beta_{m}L_m}{R_{1}+R_{2,n}+\cdots+R_{m}} =\frac{\sum\limits_{i=1}^m\beta_{i}L_i}{\sum\limits_{i=1}^mR_{m}}, \forall m.
    \end{equation} 
    The offloading sum rate can be derived as 
    \begin{equation}
    \begin{aligned}
    	 \sum\limits_{i=1}^mR_{i}=&B\log_2(1+|h_{1}|^2p_{1})+B\log_2\left(\frac{1+|h_{1}|^2p_{1}+|h_{2}|^2p_{2}}{1+|h_{1}|^2p_{1}}\right) \\
    	 &+\cdots+B\log_2\left(\frac{1+\sum\limits_{i=1}^m|h_{i}|^2p_{i}}{1+\sum\limits_{i=1}^{m-1}|h_{i}|^2p_{i}}\right)\\
    	&=B\log_2\left(1+\sum\limits_{i=1}^m|h_{i}|^2p_{i}\right).\\
    \end{aligned}
    \end{equation}
    Define the number of transmitted bits as $L^{off}=\sum\limits_{m=1}^M \beta_{m}L_m$, and $R^{off}=B\log_2\left(1+\sum\limits_{m=1}^M|h_{m}|^2p_{m}\right)$. Then the offloading time can be written as
    \begin{equation}\label{Tm_off}
       T=\frac{\sum\limits_{i=1}^m\beta_{i}L_i}{B\log_2\left(1+\sum\limits_{i=1}^m|h_{i}|^2p_{i}\right)}, \forall m.
    \end{equation} 
    Now let's prove from \eqref{Tm_off} to \eqref{T1=T2=T}.
    When $m=1$, we have $T=T_1^{off}$.
    When $m=2$, we have $T=\frac{\sum\limits_{i=1}^2\beta_{i}L_i}{\sum\limits_{i=1}^2R_{2}}$.
    Since $\frac{\beta_1L_1}{R_1}=\frac{\beta_1L_1+\beta_2L_2}{R_1+R_2}$, we can have  $\frac{\beta_1L_1}{R_1}=\frac{\beta_2L_2}{R_2}$.
    By deduction, we can have \eqref{T1=T2=T}.
    We finish the proof of Lemma \ref{Lemma1}.
    
    \section{Proof of Proposition \ref{T1=T2}}\label{T_1_off=T_m_off}
    According to Lemma \ref{Lemma1}, the offloading time minimization problem of different users can be represented by 
    \begin{equation}
    \begin{aligned} \label{Problem_proof2}
    &\mathop {\min \quad \max }\limits_{\{\bm{\beta},\bm{p}\}}\left\{T_1^{off},T_2^{off}, \cdots,T_M^{off}\right\}.
    \end{aligned}
    \end{equation}
    Note that the minimum latency is  $T^*$ with the optimal solution $\{\bm{\beta}^*,\bm{p}^*\}$. This optimal solution is only obtained when $T_m^{off}=T_{m'}^{off},\forall m\neq m'$. Proof by contradiction can be exploited to prove this proposition.

    Assume that $U_{m'}$ decodes its signal firstly, and the optimal solution is obtained when $T_m^{off}>T_{m'}^{off}, m<m'$. Thus the minimum latency is $T^*$ with the optimal solution $\{\beta_m^*,\beta_{m'}^*,p_m^*,p_{m'}^*\}$. In this case, if we increase $p_m^*$ to $\hat{p}_m$, then $T_m^{off}$ will be decreased. Since $p_{m'}^*$ is fixed, and $T_{m'}^{off,*}$ will be increased. Therefore, there must exist $\hat{p}_m$ satisfying $T_m^{off}(p_m^*,p_{m'}^*)>\hat{T}_m^{off}(\hat{p}_m,p_{m'}^*)=\hat{T}_{m'}^{off}(\hat{p}_m,p_{m'}^*)>T_{m'}^{off}(p_m^*,p_{m'}^*)$. Therefore,  $\hat{T}_m^{off}(\hat{p}_m,p_{m'}^*)=\hat{T}_{m'}^{off}(\hat{p}_m,p_{m'}^*)$ should be the optimal time consumption since it has lower value than $T_m^{off}(\beta_m^*,p_m^*)$. This contradicts the assumption that  $\{\beta_m^*,p_m^*\}$ is the optimal solution to problem \eqref{Problem_proof2}.
    
    Assume that the optimal solution is obtained when $T_m^{off}<T_{m'}^{off}, m<m'$. Thus the minimum latency is $T^*$ with the optimal solution $\{\beta_m^*,\beta_{m'}^*,p_m^*,p_{m'}^*\}$. In this case, if we increase $p_m^*$ to $\hat{p}_m$, then $T_m^{off}$ will be decreased. Since $p_{m'}^*$ is fixed, and $T_{m'}^{off,*}$ will be increased. Therefore, there must exist $\hat{p}_m$ satisfying $T_m^{off}(p_m^*,p_{m'}^*)<\hat{T}_m^{off}(\hat{p}_m,p_{m'}^*)=\hat{T}_{m'}^{off}(\hat{p}_m,p_{m'}^*)<T_{m'}^{off}(p_m^*,p_{m'}^*)$. Therefore,  $\hat{T}_m^{off}(\hat{p}_m,p_{m'}^*)=\hat{T}_{m'}^{off}(\hat{p}_m,p_{m'}^*)$ should be the optimal time consumption since it has lower value than $T_{m'}^{off}(\beta_m^*,p_m^*)$. This contradicts the assumption that  $\{\beta_m^*,p_m^*\}$ is the optimal solution to problem \eqref{Problem_proof2}.
    
    Therefore, it can be concluded that the optimal solution to problem \eqref{Problem_proof2} can only be obtained when $T_m^{off}=T_{m'}^{loc}, \forall m\neq m'$.
\section{The Proof of Proposition \ref{quasi}} \label{QuasiProof}
The $\alpha$ sublevel sets of $\max \limits_{\{\beta_{m},p_{m}\}}  \left\{\tilde{T}_m^{off},T_m^{loc}, \forall m \right\}$ can be defined as 
\begin{equation}
S_{\alpha_T}=\{\{\beta_{m},p_{m}\}|\max{\left\{\tilde{T}_m^{off},T_m^{loc},\forall m \right\}}\leq \alpha_T\}.
\end{equation}
According to the definition of quasi-convex function \cite{BoydConv2004}, $\max{\left\{\tilde{T}_m^{off},T_m^{loc},\forall m\right\}}$ is quasi-convex if and only if its sublevel sets $S_{\alpha_T}$ is convex for any $\alpha_T$. Let's prove the $S_{\alpha_T}$ is convex set first. In our system, when $\alpha_T\leq0$, there are no solutions satisfying $\max{\left\{\tilde{T}_m^{off},T_m^{loc},\forall m \right\}}\leq\alpha_T$. When $\alpha_T> 0$, $S_{\alpha_T}$ can be rewritten as the following inequality:
\begin{subequations}
	\begin{align}
	&\frac{\sum\limits_{i=1}^m\beta_iL_i}{B\log_2(1+\sum\limits_{i=1}^m|h_i|^2p_i)}\leq \alpha_T, \forall m\label{T_Off}\\
	&\frac{(1-\beta_m)L_mC_m}{f_m^c}\leq\alpha_T, \forall m \label{T_1_c}
	\end{align}
\end{subequations}
It can be observed that \eqref{T_1_c} are linear equalities respect to $\beta_1$ and $\beta_2$, respectively. Therefore, \eqref{T_1_c} is convex sets for any $m$. To prove $S_{\alpha_T}$ is strictly quasi-convex, we need to prove that $\sum\limits_{i=1}^m\beta_{i}L_{i}-B\log_2(1+\sum\limits_{i=1}^m)$ is a strictly convex function. Since the first term $\sum\limits_{i=1}^m\beta_{i}L_{i}$ is liner function and the second term $-\log_2(1+\sum\limits_{i=1}^m)$ is convex function because $-\log(\cdot)$ is convex function and the term inside of $-\log$ is a liner function. Let's take two user case for example. The function in \eqref{T_Off}, $(\beta_1L_1+\beta_2L_2)/\alpha_T-B\log_2(1+|h_1|^2p_1^{off}+|h_2|^2p_2^{off})$, is a strictly convex function. To simplify the analysis, we use $a_1$, $a_2$, $b_1$ and $b_2$ to represent $L_1\alpha_T$, $L_2\alpha_T$, $|h_1|^2$ and $|h_2|^2$, respectively. We also use the variables $x_m$ and $y_m$ to respectively present $\beta_m$ and $p_m$ where $m=1, 2$. Thus the function in \eqref{T_Off} can be rewritten as $f(\bm{x},\bm{y})=(a_1x_1+a_2x_2)-B\log_2(1+b_1y_1+b_2y_2)$. To prove its convexity, we need to prove that its Hessian matrix is positive semi-definite matrix. The Hessian matrix of $f(\bm{x},\bm{y})$ is:
\begin{equation} 
\begin{aligned}
\mathbf{H}(f)=
\frac{1}{(1+b_1y_1+b_2y_2)^2\ln2}\left[
\begin{matrix}
0&0&0&0\\
0&0&0&0\\
0 &0&b_1^2 & b_1b_2 \\
0&0&b_1b_2 & b_2^2.
\end{matrix}
\right]
\end{aligned}
\end{equation}
According to the definition of semi-definite, we first define a non-zero column vector $\bm{v}=[v_1,v_2,v_3,v_4]$. To prove $\mathbf{H}$ is positive semi-definite matrix, we need to prove $\boldsymbol{v}^T\mathbf{H}\boldsymbol{v}\geq 0$. We have

\begin{equation} 
\begin{aligned}
\boldsymbol{v}^T\mathbf{H}(f)\boldsymbol{v}=
\frac{(b_1v_3+b_2v_4)^2}{(1+b_1y_1+b_2y_2)^2\ln2}.
\end{aligned}
\end{equation}
Since $v_3$, $v_4$, $b_1>0$ and $b_2>0$ have non-zero values, $\boldsymbol{v}^T\mathbf{H}(f)\boldsymbol{v}>0$. Therefore, Hessian matrix of $f$ is positive definite matrix. Hence \eqref{T_Off} is a convex set. Since all the sublevel sets are convex, the intersection of these convex sets is convex. Similarly, we could prove the constraint \eqref{eq:E_1_con} is convex set. Therefore, it can be concluded that problem \eqref{Prob:T_min_2} is a quasi-convex problem.

 \section{Proof of Proposition \ref{T_loc=T_off}}\label{T_m_loc=T_m_off}
 The task completion time minimization problem for each user, i.e., $U_m$, can be represented as
 \begin{equation}
 \begin{aligned} \label{Problem_proof}
 \mathop {\min \quad \max }\limits_{\{\beta_m,p_m\}}\left\{T_m^{off}, T_m^{loc}\right\}.
 \end{aligned}
 \end{equation}
 Note that the optimal solution to the above problem is $\{\beta_m^*,p_m^*\}$. This optimal solution is only obtained when $T_m^{off}=T_m^{loc}$. Proof by contradiction can be used to prove this proposition.
 
 Assume that the optimal solution is obtained when $T_m^{off}(\beta_m^*,p_m^*)> T_m^{loc}(\beta_m^*)$. Note that $E_m^{off}$ will increase and $E_m^{loc}$ will decrease when $\beta_m$ increases. Thus the optimal solution can be written as $T_m^{off}(\beta_m^*,p_m^*)$ satisfying the energy constraint and power constraint. In this case, there must exist $\hat{\beta}<\beta_m$ satisfying $T_m^{off}(\beta_m^*,p_m^*)>\hat{T}_m^{off}(\hat{\beta}_m,\hat{p}_m)=\hat{T}_m^{loc}(\hat{\beta}_m,\hat{p}_m)> T_m^{loc}(\beta_m^*)$. Therefore,  $\hat{T}_m^{off}(\hat{\beta}_m,\hat{p}_m)=\hat{T}_m^{loc}(\hat{\beta}_m,\hat{p}_m)$ should be the optimal time consumption since it has lower value than $T_m^{off}(\beta_m^*,p_m^*)$. This contradicts the assumption that $T^*$ is the minimum latency and $\{\beta_m^*,p_m^*\}$ is the optimal solution to problem \eqref{Problem_proof}.
 
 Similarly, assume that the optimal solution is obtained when $T_m^{off}(\beta_m^*,p_m^*)< T_m^{loc}(\beta_m^*)$. Note that $E_m^{off}$ will increase and $E_m^{loc}$ will decrease when $\beta_m$ increases. Thus the optimal solution can be written as $T_m^{off}(\beta_m^*,p_m^*)$ satisfying the energy constraint and power constraint. In this case, there must exist $\hat{\beta}>\beta_m$ satisfying $T_m^{off}(\beta_m^*,p_m^*)<\hat{T}_m^{off}(\hat{\beta}_m,\hat{p}_m)=\hat{T}_m^{loc}(\hat{\beta}_m,\hat{p}_m)< T_m^{loc}(\beta_m^*)$. Therefore,  $\hat{T}_m^{off}(\hat{\beta}_m,\hat{p}_m)=\hat{T}_m^{loc}(\hat{\beta}_m,\hat{p}_m)$ should be the optimal time consumption since it has lower value than $T_m^{off}(\beta_m^*,p_m^*)$. This contradicts the assumption that $\{\beta_m^*,p_m^*\}$ is the optimal solution to problem \eqref{Problem_proof}.
 
 Therefore, it can be concluded that the optimal solution to problem \eqref{Problem_proof} can only be obtained when $T_m^{off}=T_m^{loc}, \forall m$.

\section{Derivation of the Optimal Solution to Problem \eqref{Prob:p1,p2} }\label{OptimalDerivation2}
We rewrite problem \eqref{Prob:p1,p2} as follow
\begin{subequations}\label{Prob:p1}
	\begin{align}
	\mathop {\min }\limits_{ \{p_{1},p_{2}\}} \quad &\frac{a_1}{b_1+B\log_2(1+|h_1|^2p_1+|h_2|^2p_2)}\\
	\text{s.t.}  \quad	&0\leq p_{1}\leq P_{\max},0\leq p_{2}\leq P_{\max},  \label{p_range2}\\\nonumber
		&\kappa_1(1- \beta_1) L_1C_1 (f_1^{loc})^2\\
		&+\frac{\beta_1L_1+\beta_2L_2}{B\log_2(1+|h_1|^2p_1+|h_2|^2p_2)}p_1 \leq E_{\max}, \label{eq:E12}\\\nonumber
	&\kappa_2(1- \beta_2) L_2C_2 (f_2^{loc})^2 \\
	& +\frac{\beta_1L_1+\beta_2L_2}{B\log_2(1+|h_1|^2p_1+|h_2|^2p_2)}p_2 \leq E_{\max}\label{eq:E22}\\  \nonumber 
	\end{align}
\end{subequations}
where $a_1=L_1+L_2$ and $b_1=\frac{f_1^{loc}}{C_1}+\frac{f_2^{loc}}{C_2}$.
The Lagrangian function of problem \eqref{Prob:p1} can be written as:
\begin{equation}
\begin{aligned}
\mathcal{L}=&\frac{a_1}{b_1+B\log_2(1+|h_1|^2p_1+|h_2|^2p_2)}+\lambda_1(-p_1)\\
&+\lambda_2(p_1-P_{\max})+\lambda_3(-p_2)+\lambda_4(p_2-P_{\max})\\
&+\lambda_{5}\left(\kappa_1a_1(f_1^c)^3+a_1p_1-E_{\max}(b_1+R(p_1,p_2))\right)\\
&+\lambda_{6}\left(\kappa_1a_1(f_1^c)^3+a_1p_2-E_{\max}(b_1+R(p_1,p_2))\right)
\end{aligned}
\end{equation}
where $\lambda_i$ are Lagrangian multipliers corresponding to constraints in problem \eqref{Prob:p1}. Since it is convex problem and satisfies Slater's condition, thus KKT conditions are necessary and sufficient to obtain the optimal solution \cite{BoydConv2004}. To obtain the optimal solution,  KKT conditions (i.e., stationary condition, primal feasible condition, dual feasibility condition and Complementary slackness condition)  of problem \eqref{Prob:p1} can be written as follow:\\
Stationary condition:
\begin{equation}
	\begin{aligned}
	\frac{\partial\mathcal{L} }{\partial p_1}=&\frac{-a_1\frac{B|h_1|^2}{(1+|h_1|^2p_1^*+|h_2|^2p_2^*)\ln(2)}}{\left(b_1+B\log_2(1+|h_1|^2p_1^*+|h_2|^2p_2^*)\right)^2}-\lambda_1+\lambda_2\\
	&\label{partial_p1} 
	+\lambda_5\left(a_1-\frac{E_{\max}B|h_1|^2}{(1+|h_1|^2p_1^*+|h_2|^2p_2^*)\ln(2)}\right)=0
	\end{aligned}
\end{equation}
\begin{equation}
\begin{aligned}
\frac{\partial\mathcal{L} }{\partial p_2}=&\frac{-a_1\frac{B|h_2|^2}{(1+|h_1|^2p_1^*+|h_2|^2p_2^*)\ln(2)}}{\left(b_1+B\log_2(1+|h_1|^2p_1^*+|h_2|^2p_2^*)\right)^2}-\lambda_3+\lambda_4\label{partial_p2}\\ 
&+\lambda_6\left(a_1-\frac{E_{\max}B|h_2|^2}{(1+|h_1|^2p_1^*+|h_2|^2p_2^*)\ln(2)}\right)=0.
\end{aligned}
\end{equation}
Primal feasible condition:
\begin{subequations}
	\begin{align}
	&-p_1^*\leq 0,\ -p_2^*\leq 0\\
	&p_1^*-P_{\max}\leq 0,\ p_2^*-P_{\max}\leq 0\label{lambdaPmax1}\\ \nonumber
	&	\kappa_1a_1(f_1^{loc})^3+a_1p_1^*- \\ 
	& E_{\max}(b_1+B\log_2(1+|h_1|^2p_1^*+|h_2|^2p_2^*))\leq 0\label{primalE1} \\\nonumber
    &	\kappa_2a_1(f_1^{loc})^3+a_1p_2^* \\ 
    &- E_{\max}(b_1+B\log_2(1+|h_1|^2p_1^*+|h_2|^2p_2^*))\leq 0\label{primalE2}. \\ \nonumber
	\end{align}
\end{subequations}
Dual feasibility condition: $\lambda_i \geq 0, i={1, \cdots, 6}$.
Stationary condition:
\begin{subequations}
	\begin{align}
	&\lambda_1p_1^*= 0,\ \lambda_3p_2^*= 0\label{lambdap1}\\
	&\lambda_{2}\left(p_1^*-P_{\max}\right)=0,\ \lambda_{4}\left(p_2^*-P_{\max}\right)=0 \label{lambdaPmax}\\ \nonumber
	&\lambda_{5}\left[\kappa_1a_1(f_1^{loc})^3+a_1p_1^*\right.\\ &\left. -E_{\max}\left(b_1+B\log_2\left(1+|h_1|^2p_1^*+|h_2|^2p_2^*\right)\right) \right]= 0 \label{lambdaE1}\\  
		&\lambda_{6}\left[\kappa_2a_1(f_1^{loc})^3+a_1p_2^*\right.\\
		&\left.  - E_{\max}\left(b_1+B\log_2\left(1+|h_1|^2p_1^*+|h_2|^2p_2^*\right)\right) \right]= 0. \label{lambdaE2}\\ \nonumber 
	\end{align}
\end{subequations}

In our solution, we have $p_1^*>0$ and $p_2^*>0$. To satisfy \eqref{lambdap1}, we can obtain $\lambda_{1}=0$ and $\lambda_{3}=0$. To satisfy \eqref{partial_p1}, when $a_1\geq \frac{E_{\max}B|h_2|^2}{(1+|h_1|^2p_1^*+|h_2|^2p_2^*)\ln(2)}$, we have at lease one of $\lambda_2$ and $\lambda_5$ is larger than zero. There are three cases to satisfy this condition: 1. $\lambda_2>0, \lambda_5=0$; 2. $\lambda_2=0, \lambda_5>0$; 3. $\lambda_2>0, \lambda_5>0$. 
Let's first investigate these three cases. 

When $\lambda_2>0,\lambda_5=0$, to satisfy \eqref{lambdaPmax}, we have $p_1^*=P_{\max}$
Based on $p_1^*$, to obtain $p_2^*$, we have the following calculation steps.
To satisfy \eqref{partial_p2}, due to the negativity of the first term, we have at least one of $\lambda_4$ and $\lambda_{6}$ is larger than zero based on $a_1>\frac{E_{\max}B|h_2|^2}{(1+|h_1|^2p_1^*+|h_2|^2p_2^*)\ln(2)}$. When $a_1\leq \frac{E_{\max}B|h_2|^2}{(1+|h_1|^2p_1^*+|h_2|^2p_2^*)\ln(2)}$, we must have $\lambda_{4}>0$. To be concluded, we have three cases to obtain $p_2^*$: 1. $\lambda_4>0, \lambda_6=0$; 2. $\lambda_4=0, \lambda_6>0$ based on $a_1\geq\frac{E_{\max}B|h_2|^2}{(1+|h_1|^2p_1^*+|h_2|^2p_2^*)\ln(2)}$; 3. $\lambda_4>0, \lambda_6>0$.

\emph{\underline{Case 1}}: $\lambda_2>0,\lambda_5=0$, $\lambda_4>0, \lambda_6=0$, to satisfy \eqref{lambdaPmax}, we have 
\begin{equation}
p_2^*=P_{\max}. \label{optimal p2=pmax}
\end{equation}
In this case, to satisfy the feasible conditions \eqref{primalE1} and \eqref{primalE2}, we must have
\begin{equation}
p_1^*\leq P_{1,w}(p_2^*=P_{\max})
\end{equation} 
where $P_{1,w}(p_2^*)$ is defined in \eqref{p1range_p1}.
\begin{equation}
p_2^*\leq P_{2,w}(p_1^*=P_{\max})
\end{equation}
where $P_{2,w}(p_2^*)$ is defined in \eqref{p2range_p1}.
Therefore, we have the optimal solution \eqref{Case1Opt}.

\emph{\underline{Case 2}}: $\lambda_2>0,\lambda_5=0$, $\lambda_4=0$ and $\lambda_6>0$, to satisfy \eqref{lambdaE2}, we have 
\begin{equation}
\begin{aligned}
&\kappa_2a_1(f_2^{loc})^3+a_1p_2^*  \\&- E_{\max}\left(b_1+B\log_2\left(1+|h_1|^2p_1^*+|h_2|^2p_2^*\right)\right) = 0
\end{aligned}
\end{equation}
Thus we have
\begin{equation}
p_2^*=P_{2,w}(p_1^*=P_{\max}). \label{optimal p2=W}
\end{equation}
In this case, we need to satisfy primal feasible condition as $ P_{2,w}(p_1^*=P_{\max})\leq P_{\max}$
Therefore, we can have the optimal solution as \eqref{Case1Opt}.
 
When $\lambda_2>0,\lambda_5=0$, $\lambda_4>0$ and $\lambda_6>0$, to satisfy \eqref{lambdaPmax} and \eqref{lambdaE2}, we have \eqref{optimal p2=pmax}=\eqref{optimal p2=W}, which leads to
\begin{equation}
P_{\max}=P_{2,w}(p_1^*=P_{\max}). 
\end{equation}	
This case can be included into the above cases, thus this case can be ignored.
	
When $\lambda_2=0, \lambda_5>0$, to satisfy \eqref{lambdaE1}, we have 
\begin{equation}
\begin{aligned}
      &\kappa_1a_1(f_1^{loc})^3+a_1p_1^*\\& - E_{\max}\left(b_1+B\log_2\left(1+|h_1|^2p_1^*+|h_2|^2p_2^*\right)\right) = 0. 
\end{aligned}
\end{equation}
Thus we have
\begin{equation}
p_1^*=P_{1,w}(p_2^*). \label{optimal p1=W}
\end{equation}
To obtain $p_2^*$, we also need to discus three cases: i). $\lambda_4>0, \lambda_6=0$; ii). $\lambda_4=0, \lambda_6>0$ based on $a_1\geq\frac{E_{\max}B|h_2|^2}{(1+|h_1|^2p_1^*+|h_2|^2p_2^*)\ln(2)}$; iii). $\lambda_4>0, \lambda_6>0$. 
similarly, Case iii) can be included within case i) or ii).

\emph{\underline{Case 3}}: $\lambda_2=0, \lambda_5>0$, $\lambda_4>0$ and $\lambda_6=0$, to satisfy \eqref{lambdaPmax}, we have 
\begin{equation}
p_2^*=P_{\max}. \label{optimal p2=pmax}
\end{equation}
In this case, to satisfy the feasible conditions \eqref{primalE1} and \eqref{primalE2}, we must have \eqref{Case3Opt}.

\emph{\underline{Case 4}}: $\lambda_2=0, \lambda_5>0$, $\lambda_4=0$ and $\lambda_6>0$, to satisfy \eqref{lambdaE1} and \eqref{lambdaE2}, we have \eqref{Case4}.
Therefore, we can have the optimal solution \eqref{Case4}. 

Above all, the optimal solution of problem \eqref{Prob:p1,p2} can be concluded as four cases. 


\bibliographystyle{IEEEtran}
\vspace{-0.5cm}

\end{document}